\theoremstyle{plain}
\newtheorem{theorem}{Theorem}
\newtheorem{lemma}{Lemma}
\newtheorem{corollary}{Corollary}
\theoremstyle{definition}
\newtheorem{definition}{Definition}
\theoremstyle{remark}
\newtheorem{example}{Example}
\theoremstyle{observation}
\newtheorem{observation}{Observation}
\theoremstyle{claim}
\newtheorem{claim}{Claim}
\title{On multiagent online problems with predictions}
\author{%
   Gabriel Istrate \\
   University of Bucharest \\
  Str. Academiei 14, Sector 1, Bucharest, Romania. \\
  \texttt{gabriel.istrate@unibuc.ro} \\
   \AND
   Cosmin Bonchi\c{s} \\
   West University of Timi\c{s}oara \\
   Bd. V. P\^arvan 4 \\
   \texttt{cosmin.bonchis@e-uvt.ro} \\
  \And
   Victor Bogdan \\
   West University of Timi\c{s}oara \\
  Bd. V. P\^arvan 4  \\
   \texttt{victor.bogdan@e-uvt.ro} \\
}
\begin{document}
\definecolor{shadecolor}{gray}{0.9}

\maketitle

\begin{abstract}
We study the power of (competitive) algorithms with predictions in a multiagent setting. We introduce a \emph{two predictor framework}, that assumes that agents use one predictor for their future (self) behavior, and one for the behavior of the other players. The main problem we are concerned with is understanding what are the best competitive ratios that can be achieved by employing such predictors, under various assumptions on predictor quality. 

As an illustration of our framework, we introduce and analyze a \emph{multiagent version of the ski-rental problem}. In this problem agents can collaborate by pooling resources to get a \emph{group license} for some asset. If the license price is not met then agents have to rent the asset individually for the day at a unit price. Otherwise the license becomes available forever to everyone at no extra cost. 

In the particular case of perfect other predictions the algorithm that follows the self predictor is optimal but not robust to mispredictions of agent's future behavior;  we give an algorithm with better robustness properties and benchmark it. 
\end{abstract}

\section{Introduction}
\label{intro}
A great deal of progress has been made recently in algorithmics by investigating the idea of augmenting classical algorithms by \emph{predictors} that guide the  behavior of the algorithm \cite{algo-predictions}. Such predictors, often based on Machine Learning methods, can be instrumental in many practical contexts: e.g. in the area of SAT solving portfolio-based algorithms based on predictors \cite{xu2008satzilla} proved so dominant that they eventually led to changes to the rules of the annual SAT competition. Dually, theoretical advances in algorithms with predictions have impacted research directions such as theory of online algorithms \cite{purohit2018improving,banerjee2020improving,gollapudi2019online}, data structures \cite{mitzenmacher2018model,ferragina2020learned,pmlr-v162-lin22f}, streaming algorithms \cite{Jiang2020Learning-Augmented,chen2022triangle}, 
and even, recently, (algorithmic) mechanism design \cite{xu2022mechanism,agrawal2022learning,gkatzelis2022improved,istrate2022mechanism,balkanski2023online,caragiannis2024randomized}.  The great dynamism of this theory is summarized by the \emph{Algorithm with Predictions webpage \cite{algo-predictions-webpage} } which, as of May 2025 contains 250 papers. Some significant ideas include \emph{distributional advice} \cite{pmlr-v139-diakonikolas21a,canonne2025}, \emph{private predictions} \cite{amin2022private}, \emph{redesigning ML algorithms to provide better predictions} \cite{anand2020customizing}, the use of \emph{multiple predictors} \cite{pmlr-v162-anand22a}, of \emph{predictor portfolios} \cite{dinitz2022algorithms}, 
\emph{uncertainty-quantified predictions} \cite{sun2023online}, \emph{costly predictions} \cite{pmlr-v206-drygala23a}, \emph{heterogeneous predictors} \cite{DBLP:conf/icml/MaghakianLH0S023}, \emph{calibrated predictions} \cite{shen2025algorithms}, and even \emph{complexity classes for online problems with prediction} \cite{berg2024complexity}. 

Conspicuously absent from this list is any attempt of \textbf{investigating the issue of supplementing algorithms by predictions \emph{in multi-agent settings}.} A reason for this absence is, perhaps, the fact that the introduction of multiple agents turns the problem from an optimization problem into a game-theoretic problem, and it less clear what the relevant measures of quality of a solution are in such a setting. A dual reason is that \emph{predictions may interact (nontrivially) with individual optimization}: each agent may try to predict every other agent, these predictions might be incorporated into agent behaviors, agents might attempt to predict the other agents' predictions, and so on. A third and final reason is that we simply don't have crisp, natural examples of problems amenable to an analysis in the framework of algorithm with predictions with multiple agents. 

In this paper we take steps towards the competitive analysis of online problems with predictions in multiagent settings. 
We will work in the framework of \emph{online games} \cite{engelberg2016equilibria}. In this setting agents need to make decisions in an online manner while interacting strategically with other agents. They have deep uncertainty about their own behavior and, consequently, use \emph{competitive analysis} \cite{borodin2005online} to define their objective function. 
We further extend this baseline setting by augmenting agent capabilities with \emph{predictors} \cite{purohit2018improving}. 
Agents may choose the degree of confidence in following the suggestions of the predictor. This usually results \cite{purohit2018improving} in a risk-reward type tradeoff between two measures of algorithm performance, \emph{consistency} (the performance of the algorithm when the predictor is perfect) and \emph{robustness} (the performance when the predictor is bad).

\textbf{Our Contributions.} The main \emph{conceptual contributions} of this paper are: 
\begin{itemize}[leftmargin=*]
\item a \textbf{two-predictor framework for competitive online games with predictions:} given such a game, we will assume that agents are endowed with two predictors - one for its future (self) behavior, and one for the (aggregate) behavior of other agents. 
The two limit cases we consider, for each predictor, are (a). the predictor is perfect (b). it is pessimal. 
\item as a testcase for the feasibility of our framework, \textbf{we  define and analyze (in the two-predictor framework) a multiagent version of the well-known ski-rental problem} \cite{borodin2005online}. 
\end{itemize} 
We aim to answer the following questions about this problem, from a single agent perspective:  
\begin{shaded} 
\begin{itemize}[leftmargin=*]
\item[-] what is the best possible competitive ratio of a predictionless algorithm, and how much can one (or both) predictor(s) help improve it? 
\item[-] how do algorithms achieving such optimal improvements look like, and how robust are they to mispredictions? 
\end{itemize} 
\end{shaded} 
The answers we provide are summarized in Table~\ref{default}, and can intuitively be described as follows: 
\begin{shaded} 
\begin{itemize}[leftmargin=*] 
\item[-] the optimal algorithm with no/pessimal self- and others' predictions is $(B+1)$-competitive; it rents on day 1 and pledges the full amount $B$ on day 2. 
\item[-] \textbf{when the others' predictor is pessimal, self-predictors don't help:} no algorithm with self-predictions (even perfect ones) can be better than $B+1$-competitive (although perfect self-predictors help more algorithms reach this optimal competitive ratio). 
\item[-] when the others' predictor is perfect but the self-predictor is pessimal, \emph{the optimal competitive ratio depends on the actions of other players} (Theorem~\ref{thm-full-free}). We characterize all the optimal algorithms. 
\item[-] when others' predictions are perfect, the algorithm that blindly follows the self-predictor is optimal  (1-competitive) when the self-predictor is perfect, but is \emph{not robust to self-mispredictions} (Thm.~\ref{thm:blindly}). 
\item[-] we also give \textbf{an algorithm, parameterized by $\lambda\in [0,1]$, that interpolates between the optimal algorithm with no self-predictions and the one that blindly follows the predictor, 
trading off consistency for robustness, in the style of results from \cite{purohit2018improving} for the one-agent fixed-cost ski-rental with predictions.} The robustness of the algorithm depends directly on  $\lambda$. As for the consistency, we give (Theorem~\ref{alg-predictions}) an analytic guarantee valid for an arbitrary input. We benchmark the robustness our algorithm through experiments. Our experiments suggest that \textit{the algorithm we propose is promising in scenarios where the others' predictor is imperfect} as well. 
\end{itemize} 
\end{shaded}


\textbf{Broader Impact.} \emph{Predictions} have many potential strategic implications in multiagent systems. For instance agents might exploit their predictions of other agent behaviors in adversarial settings. At the same time, mispredictions might disrupt critical collaborative decisions, requiring robust safeguards. 

Though our paper is conceptual, we hope that ideas and techniques similar to those developed in this paper can inspire research that ultimately applies to more practical frameworks, such as systems of interacting Markov decision processes. At the same time, improved algorithms that intelligently exploit predictions could help in a variety of settings, e.g. optimizing  resource sharing in distributed systems (an attractive example is that of using Traffic/Navigator Apps such as Waze or Sygic).  

\textbf{Outline.} The plan of the paper is as follows: in Section~\ref{sec:prelim} we review some notions we will use and define the main problem we are interested in. The Two Predictor Model is investigated next: the case of pessimal self-predictors in Section~\ref{two:pessimal}, and that of optimal self-predictors in Section~\ref{sec:prophet}. We give (Section~\ref{sec:4}) an algorithm for such an agent displaying a tradeoff between robustness and consistency. We benchmark our algorithm (Section~\ref{sec:experiments}) and discuss further issues and open problems (Section~\ref{sec:conclusions}). \textbf{We defer all proofs to the Supplemental Material}.

\begin{table}[htp]
\caption{Summary of optimal competitive ratios (and algorithms realizing them).}
\begin{center}
\begin{tabular}{cccc}
\toprule
Self-predictor & Others Predictor & Competitive Ratio & Optimal Algorithm(s)  \\
\midrule
none/pessimal & none/pessimal & $B+1$ & described in Thm.~\ref{pessimal-no}  \\
optimal & none/pessimal & $B+1$ & described in Thm.~\ref{opt:self:noother}  \\
none/pessimal & optimal & $\stackrel{\mbox{depends on other players' behavior}}{\mbox{(explicit formula in Thm.~\ref{thm-full-free})}}$ & described in Thm.~\ref{thm-full-free}\\
optimal & optimal & $1$ & Algorithm~\ref{zero-alg} \\
\bottomrule
\end{tabular}
\end{center}
\label{default}
\end{table}%

\section{Preliminaries}
\label{sec:prelim} 

Given a sequence $a=(a_n)_{n\geq 1}$ we will denote, abusing notation, by $argmin(a)$ both the set of all indices $i\geq 1$ such that $a_i=min\{a_{n}:n\geq 1\}$ and the \emph{smallest index} of an element in this set. Also define $largmin(a)$ to be the \emph{largest index} of an element in $argmin(a)$. 

We assume knowledge of standard concepts in the competitive analysis of online algorithms, at the level described e.g. in \cite{borodin2005online}. We will consider online algorithm $A$ dealing with inputs $x$ from a set $\Omega$ of all legal inputs. We say that $A$ is \emph{$\gamma$-competitive} iff there exists a constant $d$ (that may depend on $\gamma$, but not on $x$) such that $cost(A(x))\leq \gamma \cdot cost(OPT(x))+d$ for all $x$. Additionally, $A$ is \textit{strongly $\gamma$-competitive} when $d=0$. By forcing somewhat the language, we refer to the quantity 
$\frac{cost(A(x))}{cost(OPT(x))}$ as the \emph{competitive ratio of algorithm $A$ on input $x$}, denoted by $c_{A}(p)$. We will also denote by $c_{OPT}(x)$ the optimal competitive ratio of an algorithm on input $x$. This assumes the existence of a class $\mathcal{A}$ of online algorithms from which $A$ is drawn, hence $c_{OPT}(x)=\inf(c_{A}(x):A\in \mathcal{A}).$

An online algorithm \emph{with predictions} bases its behavior not only on an input $x\in \Omega$ but also on a \emph{predictor $\widehat{Z}$} for some unknown information $Z$ from some "prediction space" $\mathcal{P}$. 
We benchmark the performance of an 
an online algorithm $A$ with predictions on an instance $x$ as follows: instead of a single-argument function $A(x)$ we use a \emph{three-argument function} $A(x,Z,\widehat{Z})$. The following idea was proposed in \cite{purohit2018improving}, for measuring the performance of online algorithms with (self) predictions: given a predictor $\widehat{Z}$ for the future, the performance of an online algorithm is measured not by one, but by a pair of competitive ratios: \textit{consistency},  the competitive ratio achieved in the case where the predictor is perfect, and \textit{robustness}, the competitive ratio in the case where the predictor is pessimal.  We stress that in \cite{purohit2018improving,gollapudi2019online,angelopoulos2020online,wei2020optimal} consistency/robustness are \textbf{not} aggregated into a single performance measure. Instead, one computes the Pareto frontier of the two measures, and/or optimizes consistency for a desired degree of robustness.

\begin{definition} An online algorithm with predictions $A$ is \emph{$\gamma$-robust  on input x} iff there exists a constant $d$ (that may depend on $\gamma$, but not on $x,Z,\widehat{Z}$) such that $cost(A(x,Z,\widehat{Z}))\leq \gamma \cdot cost(OPT(x))+d$ for all values $Z,\widehat{Z}$\footnote{Strong $\gamma$-robustness is obtained when $d=0$. Note that the result in \cite{purohit2018improving} 
guarantees strong robustness, whereas our (similar) Theorem~\ref{alg-predictions}  only guarantees the weaker version of robustness. This, however, seems unavoidable for technical reasons, explained/discussed in Section~\ref{def:expl} of the Appendix.}, and \emph{$\beta$-consistent} if $cost(A(x,Z,\widehat{Z}=Z))\leq \beta\cdot cost(OPT(x))$ for all $Z$. If this holds for every $x$ (with the same $d$) we will simply refer to the algorithm as $\gamma$-robust and $\beta$-consistent. 
\label{def:one}
\end{definition} 


\subsection{Competitive online games with prediction and multiagent ski-rental}

Consider a $n$-person competitive online game \cite{engelberg2016equilibria}. 
A \emph{strategy profile} in such a game is a vector of online algorithms $W=(W_1,W_2,\ldots, W_n)$, one for each player. Given strategy profile $W,$ denote by $BR_i(W)$ the set of strategy profiles where the $i$'th agent plays a best-response (i.e. an algorithm minimizing its competitive ratio) to the other agents playing $W$. Abusing notation, we will also write $BR_i(W)$ for an arbitrary strategy profile in this set. 
A \textbf{competitive ratio equilibrium} is a strategy profile $(W_1,W_2,\ldots, W_n)$ such that each $W_i$ is a best-response to the programs of the other agents.  

A \textbf{competitive online game with predictions} is the variant of competitive online games obtained by assuming instead that agents employ algorithms with predictions. 
The following could be a formalization of this concept that is sufficient for our purposes (we made no attempt to consider the most general setting, and will not actually use it, as we don't prove general results): Assume that each agent $i$ can perform, at every integer time $k\geq 1$, an action from a finite set $\Gamma_i$. We require that every $\Gamma_i$ contains a special action STOP. Player $i$ playing STOP is interpreted as no longer participating in the game. We require that this is a final decision (i.e. once a player stops playing it never returns). Finally, assume that there exists a cost function $c_i: \Gamma_{i}^{\mathbb{N}}\times \mathbb{N} \rightarrow \mathbb{R}_{+}\cup \{\infty\}$, the cost incurred by player $i$ if players play action profile $\gamma$ at moment $k$. We assume that if player $i$ plays STOP in profile $\gamma$ at time $k$ then $c_i(\gamma,k)=0$. 
The \emph{total cost incurred by player $i$ on action profile $\gamma$} is defined as $c_{i}(\gamma):=\sum_{k}c_{i}(\gamma,k)$.

We will consider such games under a \textbf{two-predictor model}: we  assume that every agent $i$ comes with two predictors - a \emph{self predictor} for its own future, and an \emph{others' predictor},  for the actions of other agents.\footnote{Multiagent games where agents attempt to predict the aggregate behavior of other players have been studied before: an example is the \emph{minority game} in Econophysics \cite{minority-games}.} A \emph{self-predictor} is a monotonically decreasing function $s_i:\mathbb{N}\rightarrow \{0,1\}$. $s_i(k)=0$ is interpreted as the player having stopped playing the game by time $k$. Let $\Gamma_{-i}=\prod_{j=1, j\neq i}^{n} \Gamma_j$.  An \emph{others predictor} is a function $o_{i}:\mathbb{N}\rightarrow \mathcal{P}_{f}(\Gamma_{-i})$. $o_{i}(k)$ denotes the predicted finite set of possible action profiles of other players at moment $k$. We assume that $o_i$ and $s_i$ are such that if a player is predicted to play STOP at time $k$ then it will be predicted to play STOP at all subsequent times. 
\begin{observation} 
Since we are using \emph{two} predictors, the possible combinations of pessimal/perfect predictors requires  \textbf{four} competitive ratios to quantify the performance of an algorithm on a given input $p$. 
Optimizing each of these four competitive ratios individually over the set of all deterministic online algorithms (and, possibly,  inputs $x$) yields the values in Table~\ref{default} (the \emph{algorithms} realizing these minimal values will generally vary with the scenario - no single algorithm realizes them all). 
\end{observation}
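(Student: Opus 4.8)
The statement is organizational rather than self-contained: it asserts that the two independent binary choices (self-predictor $\in\{\text{perfect},\text{pessimal}\}$ and others'-predictor $\in\{\text{perfect},\text{pessimal}\}$) give rise to \emph{four} separate min--max problems, whose optimal values are precisely the four entries of Table~\ref{default}, together with the separation claim that the minimizing algorithm genuinely changes from cell to cell. The plan is therefore to treat the four cells independently: for each I would exhibit an online algorithm attaining the claimed ratio (the upper bound) and a worst-case instance \emph{respecting the predictor-quality assumption of that cell} that forces it (the lower bound). This is exactly the content of Theorems~\ref{pessimal-no}, \ref{opt:self:noother}, \ref{thm-full-free} and the analysis of Algorithm~\ref{zero-alg}; the observation is their conjunction plus a final separation argument. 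Throughout I would fix the cost convention in which an active agent pays one unit per day until the collective pledges reach the price $B$, at which point the license is held forever, and $cost(OPT(x))$ denotes the agent's offline-optimal cost given full knowledge of both its own stopping time and the others' pledges.

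Three of the four cells carry a constant value and are the easier ones. When both predictors are perfect the agent knows the entire instance, so it simply executes the offline optimum and is $1$-competitive, with the trivial matching lower bound $1$ (Algorithm~\ref{zero-alg}). For the two cells with a pessimal others'-predictor (Theorems~\ref{pessimal-no} and \ref{opt:self:noother}, both yielding $B+1$) the binding difficulty is not the horizon but the agent's inability to foresee whether the group will complete the purchase: I would prove the upper bound by analysing the explicit algorithm that rents on day $1$ and commits the full price $B$ on day $2$, and the lower bound by an adversary that makes the others complete the purchase exactly when (and only when) the agent commits its funds, so that the committed price is wasted against an $OPT$ that free-rides, forcing the ratio up to $B+1$. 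The crucial point for Theorem~\ref{opt:self:noother} is that this adversary reacts to the agent's \emph{commitment}, not to its stopping time, so it is insensitive to the self-predictor; hence perfect self-knowledge cannot push the value below $B+1$, which is exactly the ``self-predictors don't help'' phenomenon.

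The main obstacle is the remaining cell --- perfect others', pessimal self (Theorem~\ref{thm-full-free}) --- the only one whose optimal value is not a constant but depends on the instance. Here the perfect others'-predictor reveals the day on which the collective purchase would occur, which effectively \emph{truncates} the agent's horizon at that day, reducing the situation to a single-agent ski-rental on a truncated horizon; but the pessimal self-predictor leaves the agent's own stopping time adversarial. I would parameterise candidate algorithms by the day on which they switch from renting to committing, optimise this switch day against the worst admissible stopping time for each fixed truncation day and each $B$, and read off the resulting closed form, matching it with an instance-by-instance lower bound. The delicate part is precisely that both the optimiser and the adversary now vary with the instance, so the lower bound cannot be a single worst case but must track the instance-dependent formula, and the interaction between the truncation day and the ski-rental threshold must be handled across the different regimes.

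Finally, the clause that no single algorithm attains all four optima simultaneously I would establish by a separation between two cells. The algorithm that is optimal in the all-perfect cell blindly follows the self-predictor and, by Theorem~\ref{thm:blindly}, is $1$-competitive there but is \emph{not} robust to self-mispredictions, so its ratio in the pessimal/pessimal cell is strictly worse than $B+1$ and it fails to be optimal there. Conversely, the prediction-oblivious algorithm of Theorem~\ref{pessimal-no} ignores the perfect predictions and still pays $1+B$ on instances where perfect information would make $OPT$ trivial, so its ratio in the all-perfect cell is bounded away from $1$ and it is not optimal there either. Since the minimizer in one cell is provably suboptimal in another, the optimizing algorithm genuinely changes from scenario to scenario, completing the observation.
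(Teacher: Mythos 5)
Your overall decomposition is exactly the paper's: the Observation carries no standalone proof, being the conjunction of Theorems~\ref{pessimal-no}, \ref{opt:self:noother}, \ref{thm-full-free} and Theorem~\ref{thm:blindly} (for Algorithm~\ref{zero-alg}), and your closing separation argument is a correct way of making explicit the parenthetical claim that no single algorithm is optimal in all four cells: Algorithm~\ref{zero-alg} is $r$-robust for no $r$ (Theorem~\ref{thm:blindly}), hence suboptimal when the self-predictor is pessimal, while the algorithm of Theorem~\ref{pessimal-no} pays $B+1$ against $OPT=1$ on instances with $p_1=1$ and large $T$, hence is not $1$-competitive in the all-perfect cell.

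There is, however, a genuine flaw in your sketch of the one instance-dependent cell (perfect others', pessimal self). You assert that the perfect others' predictor ``reveals the day on which the collective purchase would occur,'' truncating the horizon and reducing the problem to fixed-price ski rental on a truncated horizon. What the predictor actually reveals is the \emph{entire profile} of the other agents' pledges, hence the daily shortfall $p_i\in\{0,\ldots,B\}$ the agent would need to pledge on day $i$ to complete the license; the correct reduction (the paper's Definition~\ref{ski-rental-known-varying}) is to ski rental with \emph{known varying prices}, not to a truncated fixed-price problem. The distinction is substantive: in case (b) of Theorem~\ref{thm-full-free} there is no free day before $M_{*}(p)+1$, so no truncation ever occurs, yet the optimal ratio $c_{OPT}(p)=\min(\{P_r/r : r\leq M_{*}(p)\}\cup\{Q_{M_{*}(p)}/M_{*}(p)\})$ depends on the whole price sequence through $P_r=r-1+p_r$, $M_{*}(p)$ and $Q_{M_{*}(p)}$. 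A family of instances parameterized by a truncation day and $B$ alone cannot express this formula, so optimizing the switch day ``for each fixed truncation day'' would not reproduce Theorem~\ref{thm-full-free}; one must instead show, for each price sequence $p$, that the algorithm buying on day $r$ is exactly $P_r/OPT_r$-competitive (via the case analysis on $T$ against Lemma~\ref{lemma-opt}) and then minimize over $r$, checking additionally that for $r\geq M_{*}(p)$ only days with $P_r=Q_{M_{*}(p)}$ are candidates and that optimal buying days are never preempted by an earlier free day. A more minor imprecision: in the pessimal-others' cells your adversary only punishes commitment, but the paper's lower bound also needs instances ruling out algorithms that never pledge the full amount (all prices equal to $B$, giving an unbounded ratio), and needs $T$ large — which is precisely why perfect self-prediction does help on inputs with $\widehat{T}\leq B$ (ratio $\widehat{T}$ in Theorem~\ref{opt:self:noother}) and only fails to beat $B+1$ in the worst case over all inputs.
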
 

\begin{observation} 
We can use the four competitive ratio characterization of quality of an algorithm $A$ on a given input to \textbf{define equilibria with predictions.} Specifically, instead of assuming (as done in competitive ratio equilibria) that agents use a best-response strategy (that may generally be optimally consistent but not robust), we will specify desired degrees of robustness/consistency for strategies employed by each player. \textbf{We defer, however, a formal treatment of equilibria with predictions to subsequent work,} and instead focus here on algorithms with good competitive ratios. 
\end{observation}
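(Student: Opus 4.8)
The plan is to treat this observation as a definitional assertion and to verify that the notion it proposes is well-posed and strictly generalizes the competitive ratio equilibrium of Section~\ref{sec:prelim}. First I would fix, for each agent $i$, a vector of target ratios $\tau_i=(\tau_i^{(1)},\tau_i^{(2)},\tau_i^{(3)},\tau_i^{(4)})$, one entry for each of the four corners (self perfect/pessimal)$\times$(others perfect/pessimal) that quantify $A_i$'s performance on a given input, and define the \emph{admissible set} $\mathcal{A}_i(\tau_i)$ as the collection of algorithms-with-predictions meeting those targets in the sense of Definition~\ref{def:one}, extended from the robustness/consistency pair to all four corners. The observation then amounts to the claim that an equilibrium can be defined relative to these admissible sets, so the first step is simply to write that definition down explicitly.

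Concretely, I would define an \emph{equilibrium with predictions} to be a strategy profile $W=(W_1,\ldots,W_n)$ together with predictors $(s_i,o_i)_i$ such that: (i) each $W_i\in\mathcal{A}_i(\tau_i)$ meets agent $i$'s prescribed targets; (ii) each $W_i$ is a best response \emph{within} $\mathcal{A}_i(\tau_i)$ to the profile $W_{-i}$, i.e. it minimizes competitive ratio subject to the target constraints; and (iii) the predictors are \emph{realizable}, meaning each self-predictor $s_i$ agrees with the actual play of $W_i$ and each others'-predictor $o_i$ has support consistent with the actions generated by $W_{-i}$. Condition (iii) is what resolves the apparent regress flagged in the introduction (agents predicting agents predicting their predictions): rather than unrolling the regress, we impose a fixed-point constraint that the collective predictions agree with the collective play they induce.

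To establish that this is the correct generalization I would verify the reduction: setting every robustness-type target to $\infty$ and demanding only optimal consistency collapses $\mathcal{A}_i(\tau_i)$ to the full strategy class and condition (ii) to an unconstrained best response, recovering exactly the competitive ratio equilibrium of Section~\ref{sec:prelim}. Hence the predictionless notion is the special case carrying no robustness demand. I would also record the monotonicity observation that tightening $\tau_i$ shrinks $\mathcal{A}_i(\tau_i)$, so the resulting family of equilibrium concepts is ordered by the target vectors; this confirms that the four-ratio characterization indeed parameterizes a spectrum of equilibria rather than a single ad hoc notion.

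The hard part, and the reason a formal treatment is deferred, is \emph{existence}. The admissible sets are constrained, so the best-response correspondence need not be nonempty (a target vector may be infeasible against a given $W_{-i}$) nor convex-valued, and the realizability condition (iii) couples all agents through a simultaneous fixed point over both self- and others'-predictors. A complete treatment would therefore require either a Kakutani/Brouwer-style argument after suitably topologizing the space of algorithms-with-predictions, or a restriction to finite/structured strategy classes (as in the ski-rental analysis) where the correspondence is tame enough to guarantee a fixed point. Verifying nonemptiness of the constrained best responses and reconciling the two realizability conditions at once is where the genuine work lies; for the present paper it suffices to observe that the definition is coherent and reduces correctly, which is precisely what the observation asserts.
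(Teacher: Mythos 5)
You should first note a category point: the statement you were given is not a theorem and the paper contains no proof of it, in the main text or the supplement. It is a programmatic observation whose second sentence explicitly announces that \textbf{a formal treatment of equilibria with predictions is deferred to subsequent work}; the paper's only formal commitments in its vicinity are the definition of competitive ratio equilibrium in Section~\ref{sec:prelim} and the consistency/robustness notions of Definition~\ref{def:one}. There is therefore nothing in the paper to compare your argument against step by step, and no sense in which your text ``proves'' the observation --- at most it demonstrates plausibility of the deferred definition, which is exactly what the observation itself asserts without argument.

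That said, your sketch is coherent and aligns well with the direction the authors evidently intend: fixing per-agent target vectors over the four pessimal/perfect corners, carving out admissible sets of algorithms meeting those targets, and then imposing an equilibrium condition relative to the admissible sets is a natural reading of ``we will specify desired degrees of robustness/consistency for strategies employed by each player,'' and it is consistent with the paper's two-predictor framework. Two caveats are worth flagging. First, your condition (ii) --- best response \emph{within} the admissible set --- is an extra choice, not forced by the observation; the text can equally be read as replacing best response by mere admissibility (satisfying the targets), and which reading is correct is precisely the kind of design decision the authors defer. Second, your reduction argument (robustness targets set to $\infty$ recovers competitive ratio equilibrium) needs care: in the predictionless setting of Section~\ref{sec:prelim} there are no predictor arguments at all, so consistency and robustness both degenerate to the plain competitive ratio rather than one constraint being relaxed and the other retained; the collapse works, but not quite as you state it. Your identification of existence (nonemptiness and fixed-point structure of the constrained best-response correspondence, coupled through predictor realizability) as the genuinely hard deferred question is reasonable speculation, but it is yours, not the paper's, and should not be presented as verifying anything the paper claims.
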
 

The main problem we are concerned with in this paper is the following \emph{multiagent ski-rental problem}: 

\begin{definition}[Multiagent Ski Rental] 
$n$ agents are initially active but may become inactive (once an agent becomes inactive it will be inactive forever.)  Active agents need a resource for their daily activity. Each day,  active agents have the option to (individually) rent the resource, at a cost of 1\$/day. They can also cooperate in order to to buy a group license that will cost $B>1$ dollars. For this, each agent $i$ may pledge some 
amount $w_i>0$ or \emph{refrain from pledging} (equivalently, pledge 0. We assume that both pledges $w_i$ and the price $B$ are integers.) If the total sum pledged is at least $B$ then the group license is acquired\footnote{Consistent with the agents not knowing the total pledged sum but only the success/failure to acquire a group license, we will assume that when the license is overpledged agents will pay their pledged sums.}, and the use of the resource becomes free for all remaining active agents from that moment on. Otherwise (we call such pledges \emph{inconsequential}) the pledges are nullified. Instead, every active agent must (individually) rent 
the resource for the day.  Agents are strategic, in that they care about their overall costs. They are faced, on the other hand, with deep uncertainty concerning the number of days they will be active. So they choose instead to minimize their competitive ratio. 
\end{definition} 

Algorithms we consider for multiagent ski-rental will be deterministic and nonadaptive (that is, their decision doesn't depend on the precise amounts pledged by other agents in the past. We can easily implement this by assuming that agents are made aware, at every moment, only of the success/failure of their pledging, and \textbf{not} of the total amount pledged.)
 Several classes are relevant: 
\begin{itemize}[leftmargin=*] 
\item[-] Formally, an \emph{algorithm with (self and others) predictions} is a function $f(k,\widehat{T},\widehat{p})$. Here $k$ is an integer (representing a day), $\widehat{T}\geq 1$ is  a prediction of the true active time of the agent, and $\widehat{p}:\mathbb{N}\rightarrow \{0,\ldots, B\}$ is a prediction of the total amounts pledged by the other agents. 

\item[-] An algorithm with predictions is \emph{simple} if whenever $\widehat{T_1},\widehat{T_2}<M_{*}(\widehat{p})$ or $\widehat{T_1},\widehat{T_2}\geq M_{*}(\widehat{p})$ then $f(k,\widehat{T_1},\widehat{p})=f(k,\widehat{T_2},\widehat{p})$. In other words all that matters for the prediction is whether $\widehat{T}\geq M_{*}(\widehat{p})$. \item[-] An algorithm with predictions is \emph{rational} if for all $k,\widehat{T},\widehat{p}$, $f(k,\widehat{T},\widehat{p})\in \{0,B-\widehat{p}(k)\}$. That is, the agent either pledges just enough to buy the license (according to the others' predictor) or refrains from pledging. 
\end{itemize} 

Finally, we will need the following single-agent generalization of classical ski-rental: 

\begin{definition}[Ski Rental With Varying Prices] Given integer $B\geq 2$, a single agent is facing a ski-rental problem where the buying price varies from day to day (as opposed to the cost of renting, which will always be 1\$). Denote by $0\leq p_i\leq B$ the cost of buying skis \textbf{exactly on day $i$}, and by $P_i = i-1+p_i$ the \textbf{total cost if buying exactly on day $i$} (including the cost of renting on the previous $i-1$ days). 
Finally, let $p=(p_{i})_{i\geq 1}$. We truncate $p$ at the first $i$ such that $p_i=0$ (such an $i$ will be called a \emph{free day}). Call a day $i$ s.t. $p_i=1$ a \emph{bargain day}. Define $
M_{*}(p)=min(P_i:i\geq 1).$

The problem will arise in two flavors: (a). Prices $p_i$ are known in advance to the agent. (b). \textrm{"everything goes":}  prices $p_i$ can fluctuate adversarially in $0,\ldots, B$. 
\label{ski-rental-known-varying} 
\end{definition}

\section{Multiagent Ski Rental with Pessimal Others' Predictions} 
\label{two:pessimal}
The analysis in this section concentrates on the case where the others' predictor is pessimal, and \textit{gives optimal values for \textbf{two competitive ratios} (both equal to $B+1$) of the four}. More specifically: we first identify optimal algorithms with pessimal self-predictions as well. Then we show that, when using pessimal others' predictors, correct self-predictions don't help improving the competitive ratio.

\subsection{Baseline Case: No/Pessimal Self and Others' Predictions} 

Note that the assumption that others' behavior may be generated adversarially effectively decouples the decisions of individual agents, turning the problem (from the perspective of an individual agent) into a single-agent ski-rental problem with varying prices of the "anything goes" flavor, the price $p_t$ the agent faces at moment $t$ being the amount left unpledged by the other agents.  Given self-prediction $\widehat{T}$, a non-adaptive algorithm can only propose a set of price thresholds $\theta_i(\widehat{T})$, buying at cost $\theta_i(\widehat{T})$ if $p_i$ satisfies $p_{i}\leq \theta_{i}(\widehat{T})$, renting otherwise. 

\begin{theorem} 
No deterministic online algorithm for the ski-rental problem with varying prices in the "anything goes" case can be better than $(B+1)$-competitive. The algorithm that rents on day $1$ and buys at any price on day $2$ (i.e. $\theta_1=0,\theta_2=B$) is the only $(B+1)$-competitive algorithm. Consequently the optimal online algorithm for multiagent ski-rental with pessimal others and self-predictions is to pledge 0 (i.e. rent) on day 1 and $B$ on day 2. It is $(B+1)$-competitive. 
\label{pessimal-no}
\end{theorem} 
\subsection{Pessimal Others' Predictions, Perfect Self-Predictions} 

We now deal with the case where the agent still has pessimal other predictions, but it is capable to perfectly predict the number of days $\hat{T}=T$ it will be active. 
The result also gives the optimal course of action for every \textit{fixed value} of the parameter $\widehat{T}$. For some values of $\widehat{T}$ the proposed algorithms are better than $(B+1)$-competitive (although the best possible competitive ratio is still $B+1$): 

\begin{theorem} The following are true: 
\begin{itemize}[leftmargin=*]
\item[-] Suppose $\widehat{T}\leq B$. Then no deterministic online algorithm for the ski-rental with varying prices in the "anything goes" case can be better than $\widehat{T}$-competitive. Algorithms with thresholds $\theta_1(\widehat{T})=0$, $\theta_{2}(\widehat{T})\leq \widehat{T}-2, \theta_{3}(\widehat{T})\leq \widehat{T}-3, \ldots, \theta_{\widehat{T}}(\widehat{T})= 0$ are the only $\widehat{T}$-competitive ones.  
\item[-] Suppose $\widehat{T}=B+1$. Then the optimal algorithms are the ones with thresholds $\theta_1(\widehat{T})=0$, $\theta_{2}(\widehat{T})\leq B, \theta_{3}(\widehat{T})\leq B-1, \ldots, \theta_{B+1}(\widehat{T})\leq 1$. They are $(B+1)$-competitive. 
\item[-] Suppose $\widehat{T}>B+1$. Then the only optimal algorithm in this case is the one in the case without predictions ($\theta_1(\widehat{T})=0,\theta_2(\widehat{T})=B$, that is $(B+1)$-competitive) 
\end{itemize} 
In particular no algorithm with perfect self-predictions can be better than $(B+1)$-competitive, same competitive ratio as that of algorithms with no/pessimal self-predictions. 
\label{opt:self:noother}  
\end{theorem}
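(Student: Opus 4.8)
The plan is to run the entire argument inside the single-agent ski-rental-with-varying-prices reduction already set up above, specialized to a perfectly predicted horizon $\widehat T = T$. Since the prediction is exact, the agent is active on exactly days $1,\dots,T$, and a nonadaptive algorithm is determined by its threshold vector $(\theta_1,\dots,\theta_T)$ with $\theta_i\in\{0,\dots,B\}$: on day $i$ it pledges $\theta_i$ and succeeds (paying $\theta_i$) iff the adversarial residual price satisfies $p_i\le\theta_i$, and otherwise rents for $1$. Thus on input $p$ the algorithm pays $(j-1)+\theta_j$, where $j$ is the first index with $p_j\le\theta_j$ (and $T$ if no such index exists), while $\mathrm{OPT}(p)=\min\big(T,\min_{1\le i\le T}((i-1)+p_i)\big)$. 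All three regimes of $\widehat T$ are analyzed through this cost expression.

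First I would establish the structural constraint $\theta_1=0$. If $\theta_1\ge 1$, the adversary plays a free day $p_1=0$: the algorithm overpledges and pays $\theta_1>0$ while $\mathrm{OPT}=0$, so its ratio is unbounded. Hence any algorithm with finite competitive ratio must rent on day $1$, which pins down the starting point of every candidate optimum. This \emph{free-day} move is the basic adversarial threat and reappears, shifted to a later day, in the tightness argument.

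Next I would prove the lower bounds with a single \emph{force-rent-then-trigger} adversary. Against thresholds with $\theta_1=0$, set $p_1=1$ (so $\mathrm{OPT}$ buys on day $1$ at cost $1$ and the algorithm is forced to rent), and thereafter set $p_k=\theta_k+1$ to force renting for as long as $\theta_k<B$. Either the algorithm rents through day $T$, paying $T$ against $\mathrm{OPT}=1$, or it is forced to buy at the first day $k$ with $\theta_k=B$, paying $(k-1)+B\ge B+1$ against $\mathrm{OPT}=1$. Reading off the two outcomes gives the claimed bounds: for $\widehat T\le B$ the best guarantee is $\widehat T$ (attained by the rent-all algorithm, which satisfies every stated constraint), while for $\widehat T\ge B+1$ every algorithm is forced to ratio at least $B+1$. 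This last outcome is exactly the ``in particular'' claim: since an algorithm's competitive ratio is the supremum over all inputs, including those with $T\ge B+1$, no perfectly self-predicting algorithm can beat $B+1$, matching the predictionless bound of Theorem~\ref{pessimal-no}.

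Finally I would match the lower bounds and characterize the optimizers. For the upper bound I would show the stated threshold profiles keep the buy-on-day-$i$ cost $(i-1)+\theta_i$ bounded by the target ratio and keep the rent-all cost under control, while $\mathrm{OPT}\ge 1$ whenever it is positive (the $p_1=0$ case being neutralized by $\theta_1=0$); a short case split on $\widehat T\le B$, $\widehat T=B+1$, and $\widehat T>B+1$ then yields consistency of each profile with its competitive ratio, and in the last regime forces the first index with $\theta_i=B$ to be $i=2$, i.e. the predictionless algorithm $\theta_1=0,\theta_2=B$. For the converse (these are the \emph{only} optimizers) I would take any violated threshold inequality at an index $i$, run the force-rent adversary up to day $i$, and then trigger a purchase there with a cheap/free price while $\mathrm{OPT}$ stays at $1$, producing a ratio above the target. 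The main obstacle is precisely this tightness bookkeeping: the adversary must simultaneously drive the algorithm into a long rental or an expensive pledge while holding $\mathrm{OPT}$ at its minimum, and tracking the exact per-index threshold bounds together with the endpoint conditions $\theta_1=0$ and $\theta_{\widehat T}=0$ at the boundary between the free-day and force-rent threats is where the counting is most delicate.
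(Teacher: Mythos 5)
Your proposal follows essentially the same route as the paper's proof: the same threshold formalization with cost $(j-1)+\theta_j$, the same free-day adversary ($p_1=0$) forcing $\theta_1=0$, the same force-rent-then-trigger adversary ($p_1=1$, $p_k=\theta_k+1$, then $p_j\le\theta_j$) that pins $\mathrm{OPT}$ at $1$, and the same case split over $\widehat{T}\leq B$, $\widehat{T}=B+1$, $\widehat{T}>B+1$ plus the observation that perfect self-prediction cannot beat $B+1$ because inputs with $T=\widehat{T}\geq B+1$ are included in the supremum. The paper merely packages your adversary argument as a single claim stating the exact competitive ratio, $\max(\Theta_2,\dots,\Theta_{\widehat{T}},\widehat{T})$ when all $\theta_j<B$ and $j-1+B$ when $\theta_j=B$ first occurs at day $j$, but the content is identical to yours (including the delicate per-index bookkeeping for the ``only'' characterization, which the paper's own write-up also leaves largely implicit).
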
 

\section{Multiagent Ski Rental with Perfect Others' Predictions} 

\label{sec:prophet}

In this section we study the other extreme in others' prediction: \textbf{agents can perfectly predict the total pledges at every stage of the game}. That is, for every $i$, the others' predictor $\widehat{p}_i$ is perfect. 
Though hardly plausible in a practical context, this scenario can be justified in several distinct ways: 
\begin{itemize}[leftmargin=*] 
\item[-] first, (as explained in the previous section)  this case provides \textbf{the other two competitive ratios} characterizing the performance of an online algorithm with two predictors: the best algorithms with no/pessimal self-predictors are given in Thm.~\ref{thm-full-free}, and those for perfect self-predictors in Thm.~\ref{thm:blindly}. 
\item[-] second, the situation such agents face is similar to that of the player in Newcomb's problem \cite{campbell1985paradoxes}, a paradox that has attracted significant interest in Decision Theory and Philosophy: in Newcomb's problem the daemon knows what the player will do and can act on this info. In our case, from the agent's perspective it's \emph{the other agents} that may behave  this way. So one could say that \textbf{what we are studying when investigating best responding with perfect others' predictions is game theory with Newcomb-like players}. 
\end{itemize}

We take the single agent perspective in multi-agent ski rental, and investigate the best response to the other agents' behavior. Since the agent learns (via the others' predictor) what the other agents will pledge (and is only unsure about the number of days it will be active),  it knows \emph{how much money it would need to contribute on any given day to make the group license feasible}. This effectively reduces the problem of computing a best response to the ski-rental problem with known varying prices, as described in Definition~\ref{ski-rental-known-varying}, to which the following notations apply: 

\begin{definition} For a given sequence $p=(p_i)_{i\geq 1}$ of prices, infinite or truncated at a certain index $n$, and $t\geq 1$, let $M_t(p)=min \{P_i: 1\leq i \leq t, P_i \mbox{ is defined}\}$ and $i_t(p)=min\{j: P_j=M_t(p)\}$. We have $1\leq M_{t+1}(p)\leq M_{t}(p)$, hence the sequence $(M_{t}(p))$ has the limit $M_{*}(p)$ as $t\rightarrow \infty$. Let $i_{*}=i_{*}(p)=min\{i: P_{i}=M_{*}(p)\}$. If $p$ represents a sequence of buying prices then $i_{*}(p)$ is the first day where the absolute minimum total cost of ownership, $M_{*}(p)$, is reached, should we commit to buying.  Sequence $i_t(p)$ is increasing and stabilizes to $i_{*}(p)$ as $t\rightarrow \infty$. 
Finally, define $k(p)=min\{t: min(P_1,\ldots, P_{t})\leq t)$, $Q_t=min\{P_{i}:i\geq t\}\mbox{ and } q_{t}=min\{p_{i}:i\geq t\}$.
\end{definition}

\begin{observation} If $t\geq i_{*}(p)$ then it is optimal to buy on day $i_{*}(p)$, or rent (whichever is cheaper). The optimal cost is $OPT_{t}=min(t,M_{*}(p))$. If $t<i_{*}(p)$ then it is optimal either to buy on day $i_t(p)$ or rent (whichever is cheaper).  The optimal cost is $OPT_{t}=min(t,P_{i_t}(p))= min(t,P_1,\ldots P_t)$
\end{observation}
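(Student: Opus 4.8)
The plan is to reduce the clairvoyant (offline) problem to a minimization over a small menu of strategies and then read off both cases directly from the definitions of $M_t,i_t,M_*,i_*$. First I fix the active horizon $t$ and enumerate the strategies available to an agent who knows all prices and is active for exactly $t$ days. Because buying is an irrevocable commitment that renders the resource free forever, any strategy is determined by the day on which the agent buys, if ever. Renting on all $t$ days costs $t$; buying on day $i$ is feasible exactly for those $1\le i\le t$ at which $P_i$ is defined (i.e. not past a free-day truncation), and costs precisely $P_i=(i-1)+p_i$, the $i-1$ rental days plus the buying price. Minimizing over this menu gives the single master formula
\[
OPT_t=\min\bigl(t,\ \min\{P_i:1\le i\le t,\ P_i\text{ defined}\}\bigr)=\min\bigl(t,M_t(p)\bigr),
\]
where the second equality is the definition of $M_t(p)$.

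Next I split on whether $t\ge i_*(p)$, using the facts (recorded in the preceding definition) that $(M_t(p))_t$ is non-increasing with limit $M_*(p)$ and that $i_*(p)$ is the \emph{first} index attaining the global minimum $M_*(p)$. If $t\ge i_*(p)$, then since $i_*(p)\le t$ and $P_{i_*(p)}=M_*(p)$ is the global minimum we get $M_t(p)\le M_*(p)$; as $M_t(p)\ge M_*(p)$ always (a minimum over a subset), equality holds, and consequently $i_t(p)=i_*(p)$. The master formula then reads $OPT_t=\min(t,M_*(p))$, attained by renting throughout (cost $t$) or by buying on day $i_*(p)$ (cost $M_*(p)$), whichever is smaller. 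If instead $t<i_*(p)$, the master formula already gives $OPT_t=\min(t,M_t(p))$, and by definition $M_t(p)=\min(P_1,\dots,P_t)=P_{i_t(p)}$ with optimal buying day $i_t(p)\le t$; hence $OPT_t=\min(t,P_{i_t}(p))=\min(t,P_1,\dots,P_t)$, realized by the two claimed strategies.

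The argument is essentially bookkeeping, so I expect no serious obstacle. The only points requiring care are the equivalence $M_t(p)=M_*(p)\iff t\ge i_*(p)$ together with the accompanying identity $i_t(p)=i_*(p)$ once $t\ge i_*(p)$; both follow at once from the minimality of $i_*(p)$ among the indices realizing the global minimum. A secondary, minor subtlety is to treat the truncation at a free day cleanly, ensuring the clause "$P_i$ defined" restricts the buying menu correctly; since encountering a free day can only lower the attainable cost, it never invalidates the master formula.
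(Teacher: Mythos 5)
Your proof is correct and matches the reasoning the paper implicitly relies on: the paper states this result as an Observation without any proof in the appendix, and your reduction of the clairvoyant optimum to a minimization over buying days, combined with the equivalence $M_t(p)=M_*(p)\iff t\geq i_*(p)$ (and $i_t(p)=i_*(p)$ in that case), is exactly the routine verification the authors intend the reader to supply. The free-day subtlety you flag is also handled correctly, since when a free day $f\leq t$ exists the option $P_f=f-1$ already sits in the menu and dominates the rent-for-$t$-days overestimate, so the master formula $OPT_t=\min(t,M_t(p))$ remains valid.
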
 

In fact, we can say more: 

\begin{lemma} 
Suppose there is no free day on day $M_{*}(p)+1$ and no bargain day on day $M_{*}(p)$. Then $k(p)=P_{i_{*}(p)}=M_{*}(p)$, and $OPT_{t}=t\mbox{ if } t<M_{*}(p)$, $OPT_{t}=M_{*}(p)\mbox{ if } t\geq M_{*}(p)$. 
\label{lemma-opt}
\end{lemma}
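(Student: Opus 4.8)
The plan is to reduce everything to locating the optimal buying day $i_*(p)$ relative to the value $m := M_*(p)$, and then to read off $OPT_t$ from the Observation. I would first record the trivial identity $P_{i_*(p)} = M_*(p)$, which holds by the very definition $i_*(p) = \min\{i : P_i = M_*(p)\}$; this already disposes of one of the claimed equalities, so the real content is $k(p) = m$ and the formula for $OPT_t$.

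The key step is to pin down $i_*(p)$, and this is where both hypotheses enter. Writing $P_{i_*} = i_* - 1 + p_{i_*} = m$ gives $i_*(p) = m + 1 - p_{i_*(p)}$. If $p_{i_*} = 0$, then $i_* = m+1$, so day $m+1 = M_*(p)+1$ would be a free day (indeed the truncation point), contradicting the no-free-day assumption; if $p_{i_*} = 1$, then $i_* = m$, so day $m = M_*(p)$ would be a bargain day, which is likewise excluded. Hence $p_{i_*(p)} \geq 2$ and $i_*(p) \leq m - 1 < m$. I note that for the displayed identities only the weaker bound $i_*(p) \leq m$ is actually required, and this already follows from the free-day hypothesis alone; the bargain-day hypothesis upgrades it to the strict bound, which is the cleaner statement.

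Next I would establish $k(p) = m$ using the monotonicity $M_t(p) \geq M_*(p) = m$ recorded in the Definition (the sequence $M_t(p)$ decreases to $M_*(p)$). For every $t < m$ we have $\min(P_1, \ldots, P_t) = M_t(p) \geq m > t$, so the defining inequality $\min(P_1,\ldots,P_t) \leq t$ of $k(p)$ fails; while at $t = m$ the bound $i_* \leq m$ gives $\min(P_1, \ldots, P_m) \leq P_{i_*} = m$, so the inequality holds. Therefore $k(p) = m = M_*(p)$. Finally I would compute $OPT_t$ by invoking the Observation and splitting on $t$: for $t \geq m$ we have $t \geq m \geq i_*$, so $OPT_t = \min(t, M_*(p)) = \min(t,m) = m$; for $t < m$ either $t < i_*$, where $OPT_t = \min(t, M_t(p)) = t$ since $M_t(p) \geq m > t$, or $i_* \leq t < m$, where $OPT_t = \min(t, M_*(p)) = \min(t, m) = t$. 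In both subcases $OPT_t = t$.

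The main obstacle is the bookkeeping in the second step: one must handle the truncation convention carefully, so that the algebraic case $p_{i_*} = 0$ is genuinely identified with a free day sitting at $M_*(p)+1$, and one should sanity-check that the excluded configurations are exactly the ones that would break the clean identities — for instance, a free day at $M_*(p)+1$ forces $i_*(p) = M_*(p)+1$ and makes $k(p) = M_*(p)+1$ rather than $M_*(p)$. Once $i_*(p) \leq M_*(p)$ is secured, the remaining assertions are immediate consequences of the monotonicity of $M_t(p)$ and the Observation.
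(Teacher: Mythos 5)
Your proof is correct, and it takes a more direct route than the paper's. The paper argues by contradiction with $k(p)$ as the pivot: from the definition of $k(p)$ it gets $M_{*}(p)\le k(p)$, assumes $M_{*}(p)<k(p)$, proves an auxiliary lemma $i_{*}(p)\le k(p)$, upgrades this to $i_{*}(p)=k(p)$ by integrality (if $i_{*}(p)<k(p)$ then $M_{*}(p)=\min(P_1,\dots,P_{k(p)-1})>k(p)-1$, which together with $M_{*}(p)<k(p)$ is impossible for integers), and finally observes that $i_{*}(p)=k(p)$ with $M_{*}(p)<k(p)$ forces $p_{k(p)}=0$, i.e.\ a free day on day $M_{*}(p)+1$ --- contradiction. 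You instead localize $i_{*}(p)$ relative to $M_{*}(p)$ at the outset via the identity $i_{*}(p)=M_{*}(p)+1-p_{i_{*}(p)}$: the excluded values $p_{i_{*}}=0$ and $p_{i_{*}}=1$ correspond exactly to the two forbidden configurations (free day on day $M_{*}(p)+1$, bargain day on day $M_{*}(p)$), giving $i_{*}(p)\le M_{*}(p)-1$; the equality $k(p)=M_{*}(p)$ then follows by checking the definition of $k(p)$ on both sides of $t=M_{*}(p)$ (failure for $t<M_{*}(p)$ since $M_{t}(p)\ge M_{*}(p)>t$, success at $t=M_{*}(p)$ since $i_{*}(p)\le M_{*}(p)$), with no contradiction needed. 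The ingredients are the same, but your organization buys some clarity: it makes explicit that the free-day hypothesis alone already yields $i_{*}(p)\le M_{*}(p)$ and hence all the displayed identities, the bargain-day hypothesis serving only to sharpen the bound to a strict one (the paper's final contradiction likewise invokes only the free-day hypothesis, though this is less visible in its write-up); and your three-way case split for $OPT_{t}$ via the Observation spells out what the paper compresses into its last two sentences.
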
 

 The following result describes the optimal deterministic competitive algorithms for this latter problem, hence the best responses for individual agents in our case:   
 
\begin{theorem} Optimal behavior in Ski Rental with Known Varying Prices is as follows: 
\begin{itemize}[leftmargin=*]
\item[(a).] if a free day exists on day $d=M_{*}(p)+1$ or a bargain day exists on day $d=M_{*}(p)$, then the agent that waits for the first such day, renting until then and acting accordingly on day $d$ is 1-competitive. The only other 1-competitive algorithm exists when the first bargain day $b=M_{*}(p)$ is followed by the first free day $f=M_{*}(p)+1$; the algorithm rents on all days, waiting for day $f$.  
\item[(b).] Suppose case (a) does not apply, i.e. the daily prices are $p_1,p_2,\ldots, p_n \ldots $, $1< p_i\leq B$ for $1\leq i\leq M_{*}(p)$, $p_{M_{*}(p)+1}\geq 1$. Then no deterministic algorithm can be better than $c_{OPT}(p)$-competitive, where 
$ c_{OPT}(p):=
min(\{\frac{P_r}{r}:r\leq M_{*}(p)\}\cup\{ \frac{Q_{M_{*}(p)}}{M_{*}(p)}\})$. Algorithms that buy on a day in $argmin(\{\frac{P_r}{OPT_r}:1\leq r\leq M_{*}(p)\}\cup \{\frac{P_r}{M_{*}(p)}: r\geq M_{*}(p), P_r=Q_{M_{*}(p)}\})$ are the only ones to achieve this ratio. 
\end{itemize} 
\label{thm-full-free}
\end{theorem}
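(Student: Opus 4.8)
The plan is to exploit the fact that, once the price sequence $p$ is fully known, a deterministic non-adaptive algorithm carries no information to react to: the only uncertainty it faces is whether it is still active, so its action on day $k$ (conditioned on being active) is a fixed function of $k$, and its whole behavior is summarized by a single \emph{commitment day} $j\in\{1,2,\ldots\}\cup\{\infty\}$, the first day it would buy ($j=\infty$ meaning ``never buy''). If the agent is active for $T$ days its cost is therefore
$$
cost_j(T)=\begin{cases} P_j & T\geq j,\\ T & T<j.\end{cases}
$$
Thus the competitive ratio is $c_j=\sup_{T\geq 1} cost_j(T)/OPT_T$, and the theorem reduces to minimizing $c_j$ over $j$ and identifying the minimizers. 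For case (b) I would first record that the hypotheses of Lemma~\ref{lemma-opt} hold, so $OPT_T=T$ for $T<M_*(p)$ and $OPT_T=M_*(p)$ for $T\geq M_*(p)$; this clean two-regime formula drives everything.

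Next I would compute $c_j$ by a case split on the position of $j$ relative to $M_*=M_*(p)$. For $j\leq M_*$: when $T<j$ the ratio is $1$ (renting matches $OPT$), and when $T\geq j$ the cost is frozen at $P_j$ while $OPT_T$ only increases, so the supremum is attained at $T=j$, giving $c_j=P_j/OPT_j=P_j/j$ (using $OPT_j=j$ for $j\leq M_*$, and $P_j>j$ in case (b)). For $j>M_*$: the dangerous regime is $M_*\leq T<j$, where the agent still rents and pays $T$ against $OPT_T=M_*$; but $p_j\geq 1$ forces $P_j\geq j>j-1$, so the worst intermediate renting ratio $(j-1)/M_*$ is dominated by the post-purchase ratio $P_j/M_*$ at $T=j$, whence $c_j=P_j/M_*$. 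Finally $c_\infty=\infty$. Minimizing, the family $j\leq M_*$ contributes $\min\{P_r/r:r\leq M_*\}$ and the family $j\geq M_*$ contributes $\min_{j\geq M_*}P_j/M_*=Q_{M_*}/M_*$ (the day $j=M_*$ sits consistently in both descriptions). This yields exactly $c_{OPT}(p)$, and the minimizing days are precisely the stated $argmin$ set; the ``only ones'' clause follows because every comparison above is strict away from the minimizers.

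For part (a) I would argue by direct verification rather than optimization. The hypothesis produces a day $d$ with total cost exactly $M_*$: a bargain day $d=M_*$ has $P_d=(d-1)+1=M_*$, and a free day $d=M_*+1$ has $P_d=(d-1)+0=M_*$. I would recompute $OPT_T$ here (Lemma~\ref{lemma-opt} does not apply, since case (a) is exactly its negated hypothesis) from the Observation, and then check that the waiting algorithm, which rents up to the first such $d$ and commits there, satisfies $cost(T)=OPT_T$ for every $T$: the ratio is $1$ for $T<M_*$, and since committing on day $d$ costs exactly $M_*=OPT_T$ for all $T\geq M_*$, the ratio stays $1$ thereafter. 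The uniqueness statement is the delicate part: any commitment on a different day must either pay $P_j>M_*$ by purchasing too early or too late, or rent past $M_*$, producing a ratio strictly above $1$ for some $T$; the single exception is the configuration where the first bargain day $b=M_*$ is immediately followed by the first free day $f=M_*+1$, in which case buying on $b$ and waiting for $f$ both cost exactly $M_*$ for every $T\geq M_*$, giving the second $1$-competitive algorithm.

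I expect the main obstacle to be the bookkeeping around $OPT_T$ in the boundary and small-$T$ regimes, together with the strictness needed for the uniqueness claims. The two-regime formula from Lemma~\ref{lemma-opt} makes case (b) transparent, but case (a) requires re-deriving $OPT_T$ directly, and establishing that no commitment day other than the prescribed one(s) is $1$-competitive forces a careful tie analysis precisely around $T=M_*$ and $T=M_*+1$, where several strategies coincide in cost.
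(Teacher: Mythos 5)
Your proposal is correct and follows essentially the same route as the paper's proof: classify deterministic algorithms by their (single) buying/commitment day $r$, use the two-regime formula $OPT_T=\min(T,M_{*}(p))$ from Lemma~\ref{lemma-opt} to show each $A_r$ is exactly $\frac{P_r}{OPT_r}$-competitive via the same case split on $T$ versus $r$ and $M_{*}(p)$, then minimize over $r$, with part (a) handled by direct verification of the waiting strategy. The only substantive difference is that the paper additionally checks a realizability subtlety (that no optimal buying day can lie beyond the first free day, since $P_{d^{\prime}}\geq d^{\prime}-1>d-1=Q_{M_{*}(p)}$ for $d^{\prime}>d$), a one-line point your framework would absorb the same way.
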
 

\section{Multiagent Ski Rental with Perfect Others' and Self-Predictions} 

We next turn to the setting with perfect others' predictions \textbf{and} self-predictions. We note that blindly following this second predictor is \emph{not} a robust strategy: 

\begin{theorem} Algorithm~\ref{zero-alg} is simple and 1-consistent, but it is $r$-robust for no $r>0$. 
\label{thm:blindly} 
\end{theorem}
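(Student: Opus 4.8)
The plan is to treat all three claims separately, using the reduction (valid because the others' predictor is perfect) of best-responding to the single-agent Ski Rental with Known Varying Prices of Definition~\ref{ski-rental-known-varying}. Under this reduction Algorithm~\ref{zero-alg} is exactly the algorithm that treats its self-prediction $\widehat T$ as the true horizon and executes the offline-optimal plan for that horizon. First I would record what this plan is: by the Observation following Lemma~\ref{lemma-opt}, the offline optimum for a horizon of length $\widehat T$ has cost $OPT_{\widehat T}=\min(\widehat T,\min_{i\le \widehat T}P_i)$, and since $\min_{i\le \widehat T}P_i\ge M_*(\widehat p)$, this optimum is realized by renting whenever $\widehat T< M_*(\widehat p)$ and by buying on day $i_*(\widehat p)$ (at total cost $M_*(\widehat p)$) whenever $\widehat T\ge M_*(\widehat p)$. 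Consequently the action $f(k,\widehat T,\widehat p)$ produced on each day $k$ depends on $\widehat T$ only through the test $\widehat T\ge M_*(\widehat p)$; this is precisely the definition of a \emph{simple} algorithm, settling the first claim.

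For 1-consistency I would instantiate $\widehat T=T$. Then the algorithm, by construction, executes the plan that is offline-optimal for the realized horizon $T$, so its incurred cost equals $OPT_T$ exactly and its competitive ratio on the input is $1$. Concretely this splits into the two regimes above: for $T\ge M_*(p)$ it pays $M_*(p)=OPT_T$, and for $T<M_*(p)$ it rents for $T$ days, paying $T=OPT_T$. Hence $cost(A(x,T,\widehat T=T))=OPT_T$ for every input, giving $\beta=1$.

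The substantive claim is the failure of robustness, for which I would produce a single family of badly-predicted instances on which $cost(A)$ is unbounded while $OPT$ stays bounded. Take the constant price sequence $p_i=B$ for all $i$ (so $P_i=(i-1)+B$, $M_*(p)=B\ge 2$ and $i_*(p)=1$), and feed the pessimal self-prediction $\widehat T=1<M_*(p)$ while letting the true active time $T$ be arbitrary. Since $\widehat T<M_*(p)$, the algorithm commits to the rent-forever plan and therefore pays $cost(A)=T$, whereas $OPT=\min(T,M_*(p))=B$ for every $T\ge B$. Thus for any candidate $\gamma>0$ and any additive constant $d$, the robustness inequality would require $T\le \gamma B+d$ for all $T$, which fails as $T\to\infty$; hence Algorithm~\ref{zero-alg} is $\gamma$-robust for no $\gamma>0$. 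The one point that must be argued carefully --- and the only real obstacle --- is the committed behavior under misprediction: one has to confirm that ``blindly following the self-predictor'' forces the algorithm to keep renting past day $\widehat T$ (rather than re-planning) once $\widehat T$ has been exceeded, which is exactly what makes $cost(A)=T$ rather than a bounded quantity.
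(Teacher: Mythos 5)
Your proposal is correct and follows essentially the same route as the paper's proof: consistency because following $\widehat{T}=T$ reproduces the offline-optimal plan, and non-robustness by under-predicting ($\widehat{T}=1$, equivalently $\widehat{T}<M_{*}(p)$) so the algorithm commits to renting forever while $OPT$ stays bounded, then letting $T\to\infty$. The only differences are cosmetic --- you instantiate a concrete constant-price sequence $p_i=B$ where the paper argues on an arbitrary $p$ with $OPT\leq P_1$, and you spell out the simplicity check and the additive constant $d$, which the paper treats as clear.
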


\begin{algorithm}[tb]
    \textbf{Input: $(p,T)$}\\
       \textbf{Predictor: $\widehat{T}\geq 1$, an integer.}\\
       \vspace{-5mm}
       \begin{algorithmic}
    \If{$(\widehat{T}\geq min(P_{i}:i\leq \widehat{T}))$}
      \State{ 
          buy on any day $i_1\leq T$ such that }
       \State{
          $P_{i_1}=min(P_{i}:i\leq \widehat{T})$\;
       }
       \Else
       \State{
        keep renting.\;
      }
      \EndIf
      \end{algorithmic} 
    \caption{An algorithm that blindly follows the predictor.}
    \label{zero-alg}
\end{algorithm}


\section{Trading off Consistency for Robustness for Perfect Others' (but imperfect Self-) Predictions}
\label{sec:4}

Still assuming perfect others' predictors, we next consider the case when self-predictions are imperfect. It is still the case that \emph{individual agent errors are effectively decoupled}:  for every agent $i$ we can characterize its performance in a strategy profile $(W_1,W_2, \ldots, W_n)$ by two such numbers, $\gamma_i$ and $c_i$, \textbf{which depend on $W_i$ only}, as the agent knows what the other agents will pledge. This reduces the problem to a ski-rental problem with varying prices. We will drop index $i$ and reason from this single-agent perspective. 

We will give an algorithm with predictions for this latter problem, that provides a more robust approach  to multiagent ski-rental. In the fixed-price case the relevant results are those in \cite{purohit2018improving} (improved in \cite{banerjee2020improving}), as well as the lower bounds proved independently in \cite{gollapudi2019online,angelopoulos2020online,wei2020optimal}.  

\begin{definition} Given a sequence of prices $p=(p_i)_{i\geq 1}$ for the single agent ski-rental problem with varying prices, let $r_0=r_0(p)$ and $r_1=r_1(p)$ be defined as 
$r_{0}=largmin(\{r: P_{r}=M_{*}(p)\})$, $
r_{1}=argmin(\{\frac{P_r}{OPT_r}:1\leq r\leq M_{*}(p)\}\cup \mbox{  } \{\frac{P_r}{M_{*}(p)}: r\geq M_{*}(p), P_{r}=Q_{M_{*}(p)}\}).$
\end{definition} 

\begin{definition} A \emph{simple algorithm with predictions $A$} for the ski-rental problem with varying prices is specified by functions $f_1,f_2$, with $f_1(p)\leq r_1(p)$, and works as follows: 
(1). if $p$ contains a bargain day $i=M_{*}(p)$ or a free day $i=M_{*}(p)+1$, then the algorithm $A$ waits for it (and is 1-competitive) 
(2). if $\widehat{T}\geq M_{*}(p)$ then $A$ buys on day $f_1(p)\leq r_1(p)$. 
(3). else the algorithm buys on day $f_2(p)\in \mathbb{N}\cup \{\infty\}$. 
\label{def-simple-algs} 
\end{definition} 

\begin{theorem} On an input $p$ of type (2) or (3), the simple algorithm with predictions specified by $f_1,f_2$ is (a). $\gamma$-robust for no $\gamma$, if $f_2(p)=\infty$.  
(b). $max(\frac{P_{f_1(p)}}{M_{*}(p)}, \frac{P_{f_2(p)}}{f_2(p)})$-consistent if $f_2(p)< M_{*}(p)$, $\frac{P_{f_1(p)}}{M_{*}(p)}$ otherwise, and $max(\frac{P_{f_1(p)}}{OPT_{f_1(p)}(p)},\frac{P_{f_2(p)}}{OPT_{f_2(p)}(p)})$-robust, when $f_{2}(p)<\infty$. 
\label{thm:upper}
\end{theorem}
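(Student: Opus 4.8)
The plan is to reduce everything to a single quantity: for a fixed \emph{buy day} $b$, the competitive ratio of the strategy ``rent until day $b$, then buy if still active.'' Since the input $p$ is of type (2)/(3), clause (1) of Definition~\ref{def-simple-algs} never fires, so the algorithm is determined by the single bit $[\widehat{T}\geq M_*(p)]$: it buys on day $f_1(p)$ when $\widehat{T}\geq M_*(p)$ and on day $f_2(p)$ otherwise (with $b=\infty$ meaning ``never buy''). Writing $\mathrm{cost}_b(T)$ for the resulting cost as a function of the true active time $T$, we have $\mathrm{cost}_b(T)=P_b$ when $T\geq b$ and $\mathrm{cost}_b(T)=T$ when $T<b$ (the agent leaves before reaching its buy day and merely rents). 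I would first record the monotonicity fact that $OPT_T=\min(T,\min_{i\leq T}P_i)$ is non-decreasing in $T$ (immediate from the Observation preceding Lemma~\ref{lemma-opt}: $OPT_T=T$ while renting is optimal, and it then stabilizes to $M_*(p)$), since this drives both halves of the argument.

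For \textbf{consistency} I set $\widehat{T}=T$ and split on $T\gtrless M_*(p)$. When $T\geq M_*(p)$ the algorithm uses buy day $f_1(p)$ and $OPT_T=M_*(p)$; checking the sub-cases $T\geq f_1(p)$ (ratio $P_{f_1}/M_*$) and $M_*\leq T<f_1$ (ratio $T/M_*\leq (f_1-1)/M_*\leq P_{f_1}/M_*$) shows the worst ratio on this range is exactly $P_{f_1(p)}/M_*(p)$. When $T<M_*(p)$ the algorithm uses $f_2(p)$ and $OPT_T=T$; if $f_2<M_*$ the ratio is $P_{f_2}/T$ on $f_2\leq T<M_*$, maximized at $T=f_2$ (giving $P_{f_2}/f_2$) and equal to $1$ for $T<f_2$, whereas if $f_2\geq M_*$ the agent never reaches $f_2$ and the ratio is always $1$. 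Taking the maximum over the two ranges yields the stated consistency: $\max(P_{f_1}/M_*,\,P_{f_2}/f_2)$ when $f_2<M_*$, and $P_{f_1}/M_*$ otherwise (using $P_{f_1}/M_*\geq 1$).

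For \textbf{robustness} the adversary controls $\widehat{T}$, hence can force either buy day, so robustness is $\max$ over $b\in\{f_1(p),f_2(p)\}$ of $CR(b):=\sup_T \mathrm{cost}_b(T)/OPT_T$, and the core claim is $CR(b)=P_b/OPT_b$. On $T\geq b$ we have $\mathrm{cost}_b(T)=P_b$ and, by monotonicity, $OPT_T\geq OPT_b$, so the ratio peaks at $T=b$, namely $P_b/OPT_b$. The delicate part is $T<b$, where $\mathrm{cost}_b(T)=T$: if $OPT_T=T$ the ratio is $1$; if $OPT_T<T$ then $OPT_T=\min_{i\leq T}P_i$, and combining $OPT_b\geq OPT_T$ with $OPT_b\leq\min_{i\leq b}P_i\leq\min_{i\leq T}P_i=OPT_T$ forces $OPT_b=OPT_T$, whence $P_b\geq b-1\geq T$ gives $T/OPT_T\leq P_b/OPT_b$. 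This is the step I expect to be the main obstacle, since it is exactly where one must rule out a short-horizon instance beating the ``natural'' $T=b$ instance; everything else is bookkeeping. Assembling these gives robustness $\max(P_{f_1}/OPT_{f_1},\,P_{f_2}/OPT_{f_2})$ when $f_2<\infty$; and for part~(a), taking $\widehat{T}<M_*$ forces buy day $\infty$, so sending $T\to\infty$ leaves $\mathrm{cost}=T$ unbounded against $OPT_T=M_*(p)$, precluding $\gamma$-robustness for every $\gamma$.
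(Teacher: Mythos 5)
Your proposal is correct; it establishes the same bounds but by a different decomposition than the paper. The paper's proof is a flat sweep through nine cases, indexed by the branch test and by the position of $T$ relative to $f_1(p)$, $f_2(p)$, $M_{*}(p)$, with each case tagged as compatible or incompatible with $\eta=0$; consistency is then assembled from the compatible ratios ($\frac{P_{f_1(p)}}{M_{*}(p)}$, and $\frac{P_{f_2(p)}}{f_2(p)}$ only when $f_2(p)<M_{*}(p)$) and robustness from all four ratios, which collapse to $\max\bigl(\frac{P_{f_1(p)}}{OPT_{f_1(p)}(p)},\frac{P_{f_2(p)}}{OPT_{f_2(p)}(p)}\bigr)$. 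You instead factor the robustness half through a clean lemma: a fixed buy day $b$ has worst-case ratio exactly $P_b/OPT_b$, proved via monotonicity of $OPT_T=\min(T,M_{*}(p))$ (which is what Lemma~\ref{lemma-opt} supplies on type (2)/(3) inputs) together with $P_b\geq b-1$; your handling of the delicate range $T<b$ with $OPT_T<T$, forcing $OPT_T=OPT_b$, is sound and is precisely what the paper's incompatible cases verify by hand. You then observe that an adversarial $\widehat{T}$ merely selects which of the two buy days is executed, so robustness is the maximum of the two lemma values. That lemma is, in substance, the statement the paper proves inside Theorem~\ref{thm-full-free} (``Algorithm $A_r$ is $\frac{P_r}{OPT_r}$-competitive''), so your route makes robustness a corollary of machinery the paper already has and is more modular, at the cost of a separate restricted case split for consistency --- which checks the same inequalities as the paper's compatible cases, including the correct restriction of the $\frac{P_{f_2(p)}}{f_2(p)}$ term to $f_2(p)<M_{*}(p)$. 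Part (a) is argued identically in both (renting forever against $OPT=M_{*}(p)$ gives an unbounded ratio), and your branch test $[\widehat{T}\geq M_{*}(p)]$ agrees with the paper's $[\widehat{T}\geq M_{\widehat{T}}(p)]$, since the two conditions coincide on type (2)/(3) inputs by Lemma~\ref{lemma-opt}.
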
 

We now apply Theorem~\ref{thm:upper} to create more robust algorithms for multiagent ski-rental,  proposing Algorithm~\ref{zeroprime-alg} below, that attempts to "interpolate" between the behaviors of algorithms in Theorems~\ref{zero-alg} and~\ref{thm-full-free}.  For every $\lambda\in (0,1]$ the algorithm is simple (see Definition~\ref{def-simple-algs}). 
The next result shows that it is $\frac{1}{\lambda}$-robust.  However, \textbf{we cannot give a purely mathematical bound for consistency. Instead we give an input-specific guarantee.} However, (a). we show  (Lemma~\ref{lemma-opt2}) that for every $p$ this guarantee improves as $\lambda$ decreases (b). in Section~\ref{sec:experiments} we  experimentally benchmark the average consistency of Algorithm~\ref{zeroprime-alg}, notably its dependency on the prediction error.   

\begin{algorithm}
   
    \textbf{Input: $(p,T)$}\\
       \textbf{Predictor: $\widehat{T}\geq 1$, an integer.}\\
       \vspace{-5mm}
       \begin{algorithmic}
    \If {we are in Case (a) of Theorem 2}
    \State {act accordingly}
    \Else
    \State {choose $r_2=r_{2}(p,\lambda)$ to be the first day $\geq \lceil (1-\lambda) (r_0-1)+ \lambda r_1\rceil$ such that $P_{r_2}- \lambda OPT_{r_2}=min(P_{t}-\lambda OPT_{t}:t\geq \lceil (1-\lambda) (r_0-1)+ \lambda r_1\rceil ) $;\\
     if $P_{r_2}>P_{r_1}$ let $r_2=r_1$}
    \State {choose $r_{3}=r_{3}(\lambda,p)$ so that $\frac{P_{r_{3}}}{OPT_{r_{3}}(p)}\leq \lambda-1+\frac{1}{\lambda}c_{OPT}(p)$  minimizing the ratio $\frac{P_{r_{3}}}{r_{3}}$}
        \If{$(\widehat{T}\geq M_{\widehat{T}}(p))\;$}
      \State{ 
          buy on day $r_2$ (possibly not buying at all if $r_2 >T$)}
       \Else
       \State{
        buy on day $r_3$ (possibly not buying at all if $r_3>T$)}
      \EndIf
      \EndIf
      \end{algorithmic} 
    \caption{An algorithm displaying a robustness-consistency tradeoff.}
    \label{zeroprime-alg}
\end{algorithm}

\begin{theorem} The competitive ratio of  algorithm~\ref{zeroprime-alg} on input $(p,T)$, prediction $\widehat{T}$,  is 
$c_{A}(p,T;\widehat{T})\leq \lambda - 1+\frac{1}{\lambda}c_{OPT}(p).$
On the other hand, if $T=\widehat{T}$ (i.e. $\eta=0$) then 
$ c_{A}(p,T;T)\leq max(\frac{P_{r_2}}{M_{*}(p)}, \frac{P_{r_{3}}}{r_{3}})\mbox{ if }r_3\leq M_{*}(p)$, 
$ c_{A}(p,T;T)\leq \frac{P_{r_2}}{M_{*}(p)},\mbox{ otherwise. }$
\label{alg-predictions} 
\end{theorem}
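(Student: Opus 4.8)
The plan is to work, as the surrounding discussion already reduces us to, with the single-agent ski-rental with known varying prices, and to prove the two displayed bounds separately by case analysis that mirrors Theorem~\ref{thm:upper} with $f_1=r_2$, $f_2=r_3$. First I would dispose of inputs of type (a) of Theorem~\ref{thm-full-free}: there the algorithm waits for the bargain/free day and is $1$-competitive, so both inequalities hold trivially (the robustness right-hand side $\gamma:=\lambda-1+\tfrac1\lambda c_{OPT}(p)$ is $\geq 1$, and the consistency right-hand side is $\geq 1$). So assume case (b), where Lemma~\ref{lemma-opt} gives $OPT_t=t$ for $t<M_*(p)$ and $OPT_t=M_*(p)$ for $t\geq M_*(p)$, hence $\frac{P_r}{OPT_r}=\frac{P_r}{r}$ for $r\leq M_*(p)$. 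Before the main argument I would record the structural facts that drive everything: (i) $OPT_t$ is nondecreasing in $t$ (since $P_{t+1}=t+p_{t+1}\geq t\geq OPT_t$); (ii) $\frac{P_{r_1}}{OPT_{r_1}}=c_{OPT}(p)$ directly from the definitions of $r_1$ and $c_{OPT}(p)$; (iii) the chain $\frac{M_*(p)}{r_1}\leq c_{OPT}(p)\leq\frac{M_*(p)}{r_0}$ forces $r_0\leq r_1$; (iv) the threshold $\theta:=\lceil(1-\lambda)(r_0-1)+\lambda r_1\rceil$ satisfies $\lambda r_1\leq\theta\leq r_1$, so $r_2\geq\lambda r_1$ and $r_1$ itself lies in the minimization range $\{t\geq\theta\}$; (v) $r_3$ is well defined, since $r_1$ meets its constraint because $c_{OPT}(p)\geq\lambda$. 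Finally, the construction forces $P_{r_2}\leq P_{r_1}$ (either the minimizer already satisfies this or we reset $r_2:=r_1$).

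For the robustness bound I would split on which branch runs (governed by $\widehat T$, hence treated as arbitrary). If the algorithm buys on $r_3$, its defining constraint gives $\frac{P_{r_3}}{OPT_{r_3}}\leq\gamma$; for $r_3\leq T$ fact (i) yields $\frac{P_{r_3}}{OPT_T}\leq\frac{P_{r_3}}{OPT_{r_3}}\leq\gamma$, and for $r_3>T$ the algorithm rents $T\leq r_3-1\leq P_{r_3}$ days, so $cost=T\leq P_{r_3}\leq\gamma\,OPT_T$ (the subcase $T<M_*(p)$ giving ratio $1$). If instead it buys on $r_2$, the worst case over $T$ is $\frac{P_{r_2}}{OPT_{r_2}}$ with $OPT_{r_2}=\min(r_2,M_*(p))$: for $r_2\geq M_*(p)$ one has $\frac{P_{r_2}}{M_*(p)}\leq\frac{P_{r_1}}{M_*(p)}\leq\frac{P_{r_1}}{OPT_{r_1}}=c_{OPT}(p)\leq\gamma$, while for $r_2<M_*(p)$ I use the minimization $P_{r_2}-\lambda OPT_{r_2}\leq P_{r_1}-\lambda OPT_{r_1}=(c_{OPT}(p)-\lambda)OPT_{r_1}$ together with $r_2\geq\lambda r_1\geq\lambda OPT_{r_1}$ to get $\frac{P_{r_2}}{r_2}\leq\lambda+\frac{(c_{OPT}(p)-\lambda)OPT_{r_1}}{r_2}\leq\lambda+\frac{c_{OPT}(p)-\lambda}{\lambda}=\gamma$. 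The rent-all edge cases are handled exactly as in the $r_3$ branch.

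For the consistency bound I set $T=\widehat T$, which ties the branch to the sign of $T-M_T(p)$. If $T\geq M_T(p)$ then (because $i_*(p)\leq M_*(p)\leq M_T(p)\leq T$ forces $M_T(p)=M_*(p)$) we have $OPT_T=M_*(p)$ and the algorithm buys on $r_2$, giving ratio $\frac{P_{r_2}}{M_*(p)}$ (the subcase $r_2>T$ again controlled by $T\leq r_2-1\leq P_{r_2}$). If $T<M_T(p)$ then $OPT_T=T$ and the algorithm buys on $r_3$; when $r_3\leq T$ the ratio is $\frac{P_{r_3}}{T}\leq\frac{P_{r_3}}{r_3}$, and this subcase can occur only if $r_3\leq M_*(p)$ (otherwise $r_3\leq T$ with $T\geq M_*(p)$ would force $M_T(p)=M_*(p)\leq T$, contradicting $T<M_T(p)$), while $r_3>T$ gives ratio $1$. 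Maximizing over all $T$ yields $\max\!\big(\frac{P_{r_2}}{M_*(p)},\frac{P_{r_3}}{r_3}\big)$ when $r_3\leq M_*(p)$ and $\frac{P_{r_2}}{M_*(p)}$ otherwise, as claimed.

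The step I expect to be the main obstacle, and where I would spend the most care, is the $r_2<M_*(p)$ part of the robustness argument, since this is exactly where the algorithm's two design choices must cooperate: the interpolation threshold supplies the multiplicative slack $r_2\geq\lambda r_1$, while the minimization of $P_t-\lambda OPT_t$ (legitimately evaluated at the in-range point $r_1$, by fact (iv)) controls the numerator, and it is their interaction that produces precisely $\gamma=\lambda-1+\tfrac1\lambda c_{OPT}(p)$ rather than a weaker bound. Establishing the supporting structural facts cleanly, especially $r_0\leq r_1$ and the two-sided estimate on $\theta$, is the scaffolding this rests on. A secondary technicality is that the rent-all-days cases are what prevent a clean strong-robustness statement in full generality: I would isolate precisely where the ratio bound $cost\leq\gamma\,OPT_T$ might degrade to $cost\leq\gamma\,OPT_T+O(1)$, consistent with the footnote's claim of only weak robustness and deferring the discussion to the appendix.
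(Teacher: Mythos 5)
Your proposal is correct and takes essentially the same route as the paper's proof: both instantiate the simple-algorithm framework of Theorem~\ref{thm:upper} with $f_1=r_2$, $f_2=r_3$, obtain the key bound $\frac{P_{r_2}}{r_2}\leq \lambda-1+\frac{1}{\lambda}c_{OPT}(p)$ from the minimization defining $r_2$ evaluated at the in-range point $r_1$ combined with $r_2\geq \lambda r_1$ (which rests on $r_0\leq r_1$), and read off the $r_3$-branch bound directly from $r_3$'s defining constraint, exactly as in the paper's auxiliary lemmas. The only differences are cosmetic, e.g.\ your direct chain $\frac{M_{*}(p)}{r_1}\leq c_{OPT}(p)\leq \frac{M_{*}(p)}{r_0}$ for $r_0\leq r_1$ (which tacitly uses $r_0\leq M_{*}(p)$, valid in case (b)) in place of the paper's contradiction argument, and your slightly more careful handling of $OPT_{r_1}$ versus $r_1$.
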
 

\begin{example} 
\label{fixed-price}
In the case when $P_{t}=B+t-1$ for all $t$ (i.e. classical ski-rental with fixed price $B$, which has a single input $p$) we have $i_{*}=1$, $P_{i_{*}}=r_1=M_{*}(p)=B$ and $P_{t}-t=B-1$ for all $t\geq 1$. From $i_{*}=r_0=1$ and Lemma~\ref{lemma-opt}, it follows that $r_2=\lceil \lambda B\rceil$. $\frac{P_{r_{3}}}{M_{*}(p)}=1+\frac{r_{3}-1}{B}$, so to make (for $r_{3}\geq B$) $\frac{P_{r_{3}}}{M_{*}(p)}\leq \lambda-1+\frac{1}{\lambda}c_{OPT}(p)=\lambda-1+\frac{1}{\lambda}\frac{2B-1}{B}$ we need $r_{3}+B-1\leq B(\lambda-1)+\frac{2B-1}{\lambda}$. 
The largest value (which minimizes the ratio $\frac{P_{r_{3}}}{r_{3}}$) is 
$r_{3}=\lfloor B\lambda + (2B-1)(\frac{1}{\lambda}-1)\rfloor$. In this case our Algorithm~\ref{zeroprime-alg} is a small variation on Algorithm 1 in \cite{purohit2018improving}:  since we proved that $r_2=\lceil \lambda B\rceil$) we can see that our algorithm performs identically to Algorithm 1 of \cite{purohit2018improving} in the case $\widehat{T}\geq M_{\widehat{T}}(P)$, but it may buy slightly earlier/later in the opposite case. On the other hand \textbf{our algorithm has the same consistency/robustness guarantees as those available for this latter algorithm}. Indeed, the robustness bound was (by design) identical to that of Algorithm 1 of \cite{purohit2018improving}. As for consistency, we show that $B\lambda + (2B-1)(\frac{1}{\lambda}-1)> B$. Indeed this is equivalent to $(2B-1)\frac{1-\lambda}{\lambda}> B(1-\lambda)$, or $\frac{1}{\lambda}> \frac{B}{2B-1}$, which is trivially true, since $B>1$ and $\lambda\leq 1$. So $r_3>   B=M_{*}(p)$. Since $\frac{P_{r_2}}{M_{*}(p)}=\frac{B+\lceil \lambda B\rceil-1}{B}\leq \frac{B(1+\lambda)}{B}=1+\lambda$, the consistency guarantee is, therefore 
$c_{A}(p,T,T)\leq 1+\lambda$, 
the same as the one from \cite{purohit2018improving} for their Algorithm 1. 

\end{example} 

Note that for a given sequence $p$, the consistency guarantees given in Theorem~\ref{alg-predictions} are increasing as a function of $\lambda$. That is,  decreasing $\lambda$ makes the algorithm more consistent.  
In particular for every fixed input $p$ Algorithm~\ref{zeroprime-alg} becomes optimal for small enough values of $\lambda$:
\begin{lemma} 
(a). For $0\leq \lambda_1<\lambda_2\leq 1$ 
$P_{r_2(\lambda_1,p)}\leq P_{r_2(\lambda_2, p)} \mbox{ and }
\frac{P_{r_3(\lambda_1,p)}}{r_3(\lambda_1,p)} \leq \frac{P_{r_3(\lambda_2, p)}}{r_3(\lambda_2,p)}. 
$
(b). If $\lambda < \frac{1}{M_{*}(p)-r_{0}+1}$ and $\widehat{T}\geq M_{*}(p)$ then Alg.~\ref{zeroprime-alg} buys on day $r_0$ (i.e. at the minimum possible price). 
\label{lemma-opt2}  
\end{lemma}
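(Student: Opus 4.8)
The plan is to work throughout in the regime of case (b) of Theorem~\ref{thm-full-free} (the branch of Algorithm~\ref{zeroprime-alg} that actually defines $r_2,r_3$), so that by Lemma~\ref{lemma-opt} I may use $OPT_t=\min(t,M_*(p))$, a non-decreasing function of $t$. First I would record two monotonicity facts that drive everything. The first is $r_0\le r_1$: for any $r<r_0$ one has $\frac{P_r}{r}\ge\frac{M_*(p)}{r}>\frac{M_*(p)}{r_0}=\frac{P_{r_0}}{r_0}$ (using $P_r\ge M_*(p)$ and $r<r_0$), while the second index set defining $r_1$ only contributes indices $\ge M_*(p)\ge r_0$; hence every minimizing index is $\ge r_0$. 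Consequently the interpolation point equals $(r_0-1)+\lambda(r_1-r_0+1)$ with $r_1-r_0+1\ge 1$, so the starting threshold $\ell(\lambda):=\lceil(1-\lambda)(r_0-1)+\lambda r_1\rceil$ is non-decreasing in $\lambda$.

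For part (a) the $r_3$ half is immediate. Writing $h(\lambda)=\lambda-1+\tfrac1\lambda c_{OPT}(p)$ for the robustness budget, $h'(\lambda)=1-c_{OPT}(p)/\lambda^2\le 0$ on $(0,1]$ since $c_{OPT}(p)\ge 1\ge\lambda^2$; thus $h$ is non-increasing, the feasible set $F(\lambda)=\{t:\frac{P_t}{OPT_t}\le h(\lambda)\}$ is nested increasing as $\lambda\downarrow$, and minimizing $\frac{P_t}{t}$ over a larger set can only lower the optimum, giving $\frac{P_{r_3(\lambda_1)}}{r_3(\lambda_1)}\le\frac{P_{r_3(\lambda_2)}}{r_3(\lambda_2)}$. (A one-line check that $h(\lambda)-c_{OPT}(p)=(1-\lambda)(\tfrac{c_{OPT}(p)}\lambda-1)\ge 0$ shows $r_1\in F(\lambda)$, so $r_3$ is well defined.) The $r_2$ half is the crux, and I would prove it by monotone comparative statics, sidestepping any quasiconvexity of $g_\lambda(t)=P_t-\lambda OPT_t$. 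Let $s_i$ be the uncapped minimizer of $g_{\lambda_i}$ over $t\ge\ell(\lambda_i)$ for $\lambda_1<\lambda_2$. Since $\ell(\lambda_1)\le\ell(\lambda_2)\le s_2$, the point $s_2$ is feasible for the $\lambda_1$-problem, giving $P_{s_1}-\lambda_1 OPT_{s_1}\le P_{s_2}-\lambda_1 OPT_{s_2}$. I then split on whether $s_1\ge\ell(\lambda_2)$: if so, $s_1$ is feasible for the $\lambda_2$-problem, and the two optimality inequalities sandwich $P_{s_1}-P_{s_2}$ between $\lambda_2 D$ and $\lambda_1 D$ with $D=OPT_{s_1}-OPT_{s_2}$, forcing $(\lambda_2-\lambda_1)D\le 0$, hence $D\le 0$ and $P_{s_1}\le P_{s_2}$; otherwise $s_1<\ell(\lambda_2)\le s_2$ gives $OPT_{s_1}\le OPT_{s_2}$ directly, and the single inequality already yields $P_{s_1}\le P_{s_2}$. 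Finally the capping replaces $P_{s_i}$ by $\min(P_{s_i},P_{r_1})$ with $P_{r_1}$ independent of $\lambda$, and $\min(\cdot,P_{r_1})$ preserves the inequality.

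For part (b), under $\widehat T\ge M_*(p)$ one has $M_{\widehat T}(p)=M_*(p)$ (since $\widehat T\ge M_*(p)\ge i_*(p)$), so the algorithm enters the $r_2$-branch and it remains to show $r_2=r_0$. The hypothesis $\lambda<\frac{1}{M_*(p)-r_0+1}$ gives $\lambda(r_1-r_0+1)\le\lambda(M_*(p)-r_0+1)<1$, so $(r_0-1)+\lambda(r_1-r_0+1)<r_0$ and therefore $\ell(\lambda)\le r_0$, making $r_0$ admissible. I then show $r_0$ is the unique first global minimizer of $g_\lambda$: with $g_\lambda(r_0)=M_*(p)-\lambda r_0$, a short case split on $t\lessgtr M_*(p)$ reduces $g_\lambda(t)\ge g_\lambda(r_0)$ to $P_t-M_*(p)\ge\lambda(OPT_t-OPT_{r_0})$, which holds because $P_t=M_*(p)$ forces $t\le r_0$ ($r_0$ being the \emph{last} minimizing index, where $OPT_{r_0}$ is already maximal among minimizers), while for every other $t$ the integrality gap $P_t-M_*(p)\ge 1$ dominates $\lambda(OPT_t-OPT_{r_0})\le\lambda(M_*(p)-r_0)<1$. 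Since $P_{r_0}=M_*(p)\le P_{r_1}$, the capping does not trigger, so $r_2=r_0$.

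The step I expect to require the most care is exactly the coupled $\lambda$-dependence inside $r_2$ in part (a): both the admissible region $t\ge\ell(\lambda)$ and the objective $g_\lambda$ shift with $\lambda$, and the payoff of interest ($P_{r_2}$, not $r_2$ itself) is not monotone in the index, so naive "argmin is monotone" reasoning fails — the feasibility/sandwich argument above is what makes it go through. A second, subtler point worth flagging is the inequality $r_1-r_0+1\le M_*(p)-r_0+1$ used in part (b), which relies on $r_1\le M_*(p)$; this is the one place where the displayed threshold is delicate, since the optimal competitive day $r_1$ can in principle be a bargain day just past $M_*(p)$, in which case the bound is correctly stated with $r_1-r_0+1$ in the denominator rather than $M_*(p)-r_0+1$.
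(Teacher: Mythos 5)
Your proof is correct, and its core step is organized differently (and more cleanly) than the paper's. For the $r_3$ half of (a) and for the skeleton of (b) you do essentially what the paper does: monotonicity of the budget $\lambda-1+\frac{1}{\lambda}c_{OPT}(p)$ gives nested feasible sets for $r_3$, and for (b) one shows $r_0$ minimizes $P_t-\lambda\, OPT_t$ by playing the integrality gap $P_t\geq M_*(p)+1$ against $\lambda(OPT_t-r_0)<1$. The genuine difference is the $r_2$ half of (a): the paper starts from your one-sided feasibility inequality and then runs a three-way case analysis on whether $OPT_{r_2(\lambda_i)}$ equals $r_2(\lambda_i)$ or $M_*(p)$, closing the hard cases with proof-by-contradiction chains; your two-sided sandwich $\lambda_2 D\leq P_{s_1}-P_{s_2}\leq\lambda_1 D$ with $D=OPT_{s_1}-OPT_{s_2}$, plus a single split on feasibility of $s_1$ in the $\lambda_2$-problem, subsumes all of those cases at once. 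You also cover two points the paper's proof silently skips: the capping step (``if $P_{r_2}>P_{r_1}$ let $r_2=r_1$'' --- the paper analyzes only the uncapped minimizer, and $\min(\cdot\,,P_{r_1})$ indeed preserves the inequality) and the nonemptiness of the constraint set defining $r_3$. Finally, the caveat you flag in (b) is a genuine flaw in the paper, not an artifact of your write-up: the paper's proof needs both $\lambda(r_1-r_0+1)\leq 1$ and $\lambda<\frac{1}{M_*(p)-r_0}$, and its concluding sentence (``choosing $\lambda<\frac{1}{M_*(p)-r_0+1}$ is enough to satisfy both conditions'') tacitly assumes $r_1\leq M_*(p)$. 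This can fail within case (b) of Theorem~\ref{thm-full-free}: with $B=100$, $p_i=100$ for $i\leq 100$ and $p_i=2$ for $i\geq 101$, one has $M_*(p)=100$, $r_0=1$, $r_1=101$, and for $\lambda\in(\frac{1}{101},\frac{1}{100})$ the starting threshold is $\lceil 101\lambda\rceil=2>r_0$, so Algorithm~\ref{zeroprime-alg} buys on day $2$ at total cost $101>M_*(p)$ --- the lemma as stated is violated. Your proposed repair (threshold $\frac{1}{r_1-r_0+1}$ when $r_1>M_*(p)$, i.e.\ in general $\frac{1}{\max(r_1,M_*(p))-r_0+1}$) is the right fix.
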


\section{Experimental evaluation of Algorithm~\ref{zeroprime-alg}} 
\label{sec:experiments}
\label{sec:6}
\begin{figure*}[ht]
\begin{center}
\centerline{
\begin{minipage}{.33\textwidth}
  \includegraphics[width=.9\textwidth,height=3cm]
  {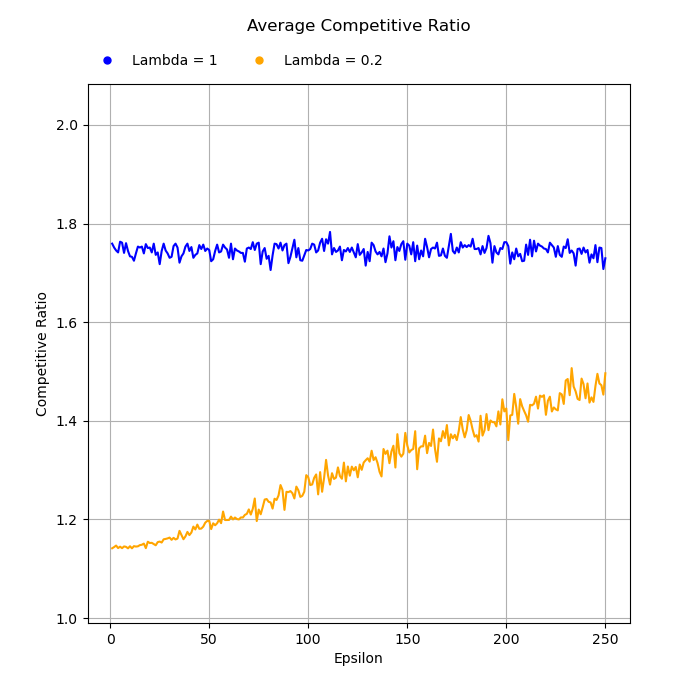}
  \end{minipage} 
  \begin{minipage}{.33\textwidth}
  \includegraphics[width=.9\textwidth,height=3cm]
  {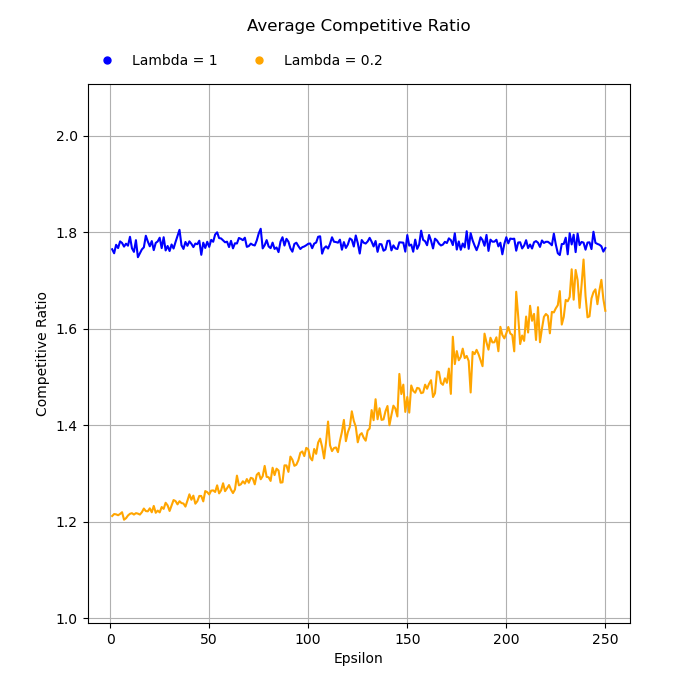}
  \end{minipage}
  \begin{minipage}{.33\textwidth}
  \includegraphics[width=.9\textwidth,height=3cm]
  {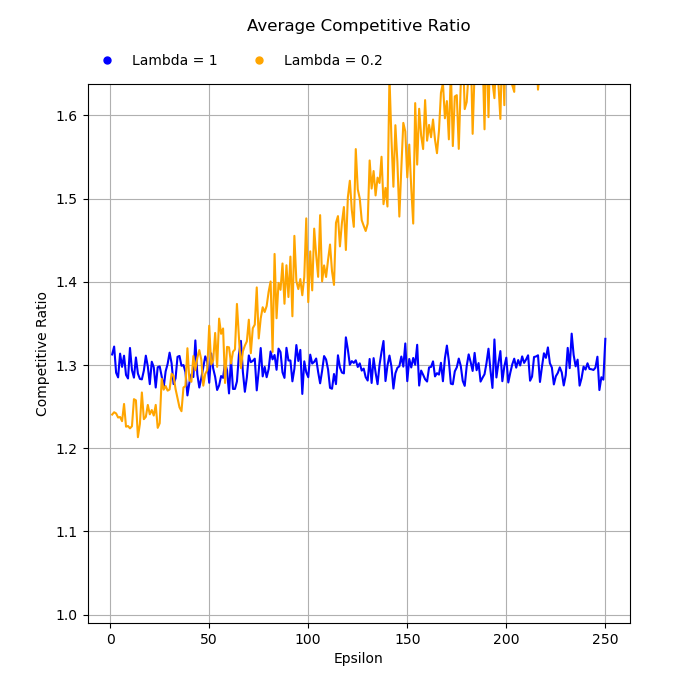}
  \end{minipage}
  }
  \caption{(a,b,c). Average competitive ratio for ski-rental with varying prices under random noise. (a). z=0 (b). z=0.5 (c). z=1. In each case 1000 samples were used for every value of the standard deviation. We used two values of  parameter $\lambda$: $\lambda =1$ (blue) and $\lambda=0.2$ (orange).}
\label{fig1}
\end{center}
\vskip -0.2in
\end{figure*}
Next, we experimentally investigate the performance of Algorithm~\ref{zeroprime-alg} in more realistic, intermediate regimes of the error parameter $\mu=|T-\widehat{T}|$, between perfect and pessimal self-prediction.  The setting mimics the one in \cite{purohit2018improving} as much as possible; in particular, we also measure \emph{average} rather than worst-case competitive ratio. We test the performance of Algorithm~\ref{zeroprime-alg}, as an algorithm for ski-rental with varying prices (this shows that \textbf{one can use it with imperfect others' predictors as well (with reasonable results), if others' aggregate behavior is  random}, rather than adversarial.) 

As with the experiments in \cite{purohit2018improving}, we assume that $B=100$, $T$ is randomly chosen in the interval $[1,4B]$, $\widehat{T}=T+\epsilon$, where $\epsilon$ is normally distributed with average 0 and standard deviation $\sigma$. Prices fluctuate randomly in the interval $[B-\lfloor zB\rfloor,B]$ for $0\leq z\leq 1$. We display three values for $z$: $z=0$ (classical ski rental), $z=0.5$, and $z=0.1$. 
The algorithm performs reasonably well under random noise: Figures~\ref{fig1} (a,b,c) shows this in three scenarios. The conclusion is that \textbf{results are qualitatively similar with those from \cite{purohit2018improving} for all values of $z$.} 
The computational requirements are rather trivial: our code is written in Python, and runs in reasonable time on a Macbook Pro. We will provide an external link to our code/experiments in the final version of the paper. 

\section{Limitations of our results} 

The major limitation of our results is \emph{the lack of an intermediate regime of error for the others' predictor.} The difficult issue is how to define a meaningful "error parameter" for a \emph{sequence} of prices $P_{i}$. A related problem is how to "interpolate" between pessimal and perfect others' predictor, and between the algorithm of Theorem~\ref{pessimal-no} and Algorithm~\ref{zeroprime-alg}. 

Another possible limitation is the fact that \emph{we provide a proof-of-concept analysis of a particular problem, rather than general results}. We feel, though, that this is not really a limitation, given that (a). classical competitive analysis of online algorithms is itself largely problem-specific, with few general results, and (b). our approach proposes an entirely new line of research, that required a proof-of-concept analysis. \textbf{Mitigating these issues is, of course, an interesting research problem.} 

Finally, two technical issues concerning (i). the strength of our analysis and (ii). our Algorithm~\ref{zeroprime-alg} are discussed in detail in Section~\ref{def:expl} of the Appendix. They seem unavoidable (for technical resons) and, at the same time, rather benign. We explain why this is the case in Section~\ref{def:expl}. 

\section{Conclusions and Further Directions} 
\label{sec:conclusions}
Our main contribution has been the definition of a framework for the analysis of competitive online games with predictions, and the analysis of a multi-agent version of the  ski-rental problem. In subsequent work \textbf{we build upon these results by defining a notion of equilibria with predictions.}

Our work leaves many issues open.  For instance, questions of interaction between prediction and optimization similar to those in Newcomb's problem have to be clarified: our results implicitly used \emph{causal decision theory} \cite{weirich2008causal}, but this is not the only option \cite{ahmed2021evidential,fourny2020perfect}.  There exists, on the other hand, a huge, multi-discipline literature that deals with learning game-theoretic equilibria through repeated interactions (e.g. \cite{kalai1993rational,young2004strategic,JMLR:v24:22-0131}). Applying such results to the setting with predictions and, generally, allowing predictions to be \emph{adaptive} (i.e. depend on previous interactions) is an open problem.  
Note also that \cite{engelberg2016equilibria}  considers a model in which programs $P$ are benchmarked \textit{against themselves}, i.e. \emph{everyone uses $P$}. Optimal programs $P$ are examples of \emph{Kantian equilibria} \cite{roemer2019we,istrategame}. Still, playing the same program does \textbf{not} mean that agents know everything about other players: even if $P$ is known, their active time might be mispredicted. How to meaningfully adapt our two-predictor model to such a setting is not completely clear.  Our results deal with \emph{deterministic algorithms}. Extending the results to probabilistic ones is an interesting challenge. 

Finally, 
a promising direction is related to the Minority game: in this game agents observe the net aggregate effect of past decisions (whether the bar was crowded or not in any of the past $M$ days) and base their action (go/don't go) on a constant number of "predictors" (initially drawn randomly from the set of boolean functions from $\{0,1\}^{M}$ to $\{0,1\}$). They rank (and follow) these predictors based on their recent success. Agents "learn" the threshold at which the bar becomes crowded: the average number of bar-goers fluctuates around $N/2$. When $M$ is sufficiently small  the fluctuations in the number of bar-goers are larger than those for the normal distribution, indicating \cite{johnson2003financial} a form of "system learning": a constant fraction of the agents use the same predictor, that predicts the system behavior.  Whether something similar is true for the others' predictor in our two-predictor model is interesting (and open).



\bibliographystyle{alpha}
\bibliography{/Users/gistrate/Dropbox/texmf/bibtex/bib/bibtheory}

\clearpage
\newpage

\appendix
\onecolumn

\section{Supplemental Material. Proofs of Theoretical Results}

\subsection{Proof of Lemma~\ref{lemma-opt}}

By the definition of $k(p)$, we have $min(P_{1},P_{2},\ldots, P_{k(p)-1})>k(p)-1$ and $M_{*}(p)=min(P_i:i\geq 1)\leq min(P_{1},\ldots, P_{k(p)})\leq k(p)$.

If $M_{*}(p)=k(p)$ then the result follows. So assume that $M_{*}(p)<k(p)$. We will show that this hypothesis leads to a contradiction.

\begin{lemma} 
$i_{*}(p)\leq k(p)$. 
\end{lemma}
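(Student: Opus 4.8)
The plan is to turn the defining identity $P_{i_*(p)}=M_*(p)$ into an explicit upper bound on the index $i_*(p)$, and then compare that bound with $k(p)$ using the standing assumption $M_*(p)<k(p)$ of this branch of the argument.

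First I would recall that $i_*(p)$ is the smallest index attaining the global minimum, so that $P_{i_*(p)}=M_*(p)$, and that this minimum is genuinely attained at a finite index because $(M_t(p))_t$ is a non-increasing sequence of nonnegative integers and hence stabilizes. Unwinding the definition $P_i=i-1+p_i$ gives $(i_*(p)-1)+p_{i_*(p)}=M_*(p)$. Since prices are nonnegative, $p_{i_*(p)}\ge 0$, whence $i_*(p)-1\le M_*(p)$, i.e. $i_*(p)\le M_*(p)+1$. All quantities here are integers, since the $p_i$ and therefore the totals $P_i$ are integer-valued.

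Next I would invoke the assumption $M_*(p)<k(p)$ under which the sub-lemma is stated. By integrality this strengthens to $M_*(p)+1\le k(p)$, and chaining with the previous inequality yields $i_*(p)\le M_*(p)+1\le k(p)$, which is exactly the claim.

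There is no real obstacle in this step; the only points that deserve care are the integrality bookkeeping (needed to pass from $M_*(p)<k(p)$ to $M_*(p)+1\le k(p)$) and the treatment of the boundary case $p_{i_*(p)}=0$. In that boundary case the bound is tight, $i_*(p)=M_*(p)+1$, so day $i_*(p)$ would be a free day lying precisely on day $M_*(p)+1$; this is forbidden by the hypothesis of Lemma~\ref{lemma-opt}, so in fact $p_{i_*(p)}\ge 1$ and one even obtains the sharper bound $i_*(p)\le M_*(p)$. I would keep track of this sharper form, since $i_*(p)\le M_*(p)\le k(p)-1$ places the small value $P_{i_*(p)}=M_*(p)$ among $P_1,\ldots,P_{k(p)-1}$, contradicting $\min(P_1,\ldots,P_{k(p)-1})>k(p)-1$ and thereby closing the outer contradiction that forces $M_*(p)=k(p)$.
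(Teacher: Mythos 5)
Your proof is correct, but it takes a genuinely different route from the paper's. The paper proves $i_*(p)\le k(p)$ unconditionally (under the hypotheses of Lemma~\ref{lemma-opt}), by a case analysis on whether day $k(p)+1$ is free: since $P_i\ge i-1\ge k(p)+1$ for $i\ge k(p)+2$, and $P_{k(p)+1}$ either exceeds $k(p)$ or equals $k(p)$ (the free-day case), in every case the global minimum $M_*(p)$ is already attained among $P_1,\dots,P_{k(p)}$, forcing $i_*(p)\le k(p)$. You instead exploit the reductio hypothesis $M_*(p)<k(p)$ that is in scope at this point of the argument: from $P_{i_*(p)}=i_*(p)-1+p_{i_*(p)}=M_*(p)$ and $p_{i_*(p)}\ge 0$ you get $i_*(p)\le M_*(p)+1$, and integrality upgrades $M_*(p)<k(p)$ to $M_*(p)+1\le k(p)$. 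This is shorter and dispenses with the free-day case split entirely, at the price of proving a statement whose validity is confined to the contradiction branch -- which is, however, the only place the paper ever uses it, so nothing is lost. Your closing observation is also correct and worth highlighting: invoking the no-free-day hypothesis to rule out $p_{i_*(p)}=0$ yields the sharper bound $i_*(p)\le M_*(p)\le k(p)-1$, which places $P_{i_*(p)}=M_*(p)\le k(p)-1$ among $P_1,\dots,P_{k(p)-1}$ and directly contradicts $\min(P_1,\dots,P_{k(p)-1})>k(p)-1$; this short-circuits the remainder of the paper's proof of Lemma~\ref{lemma-opt}, which otherwise needs several further steps (establishing $i_*(p)=k(p)$, then $k(p)=M_*(p)+1$, then exhibiting the forbidden free day).
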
 
\begin{proof} 
By the hypothesis if $i\leq M_{*}(p)+1$ then $p_i\geq 1$ and if $i\leq M_{*}(p)$ then $p_{i}\geq 2$. Since for $i\geq k(p)+2$ we have $P_i\geq i-1\geq k(p)+1$ (also for $i= k(p)+1$, provided $k(p)+1$ is not a free day) it follows that under this latter hypothesis $M_{*}(p)=min(P_i:i\geq 1)=min(P_i: i\leq k(p))\leq k(p)$. So if day $k(p)+1$ is not a free day then $i_{*}(p)\leq k(p)$. 

Assume now that $k(p)+1$ is a free day, that is $P_{k(p)+1}=k(p)$. Since $P_r>k(p)$ for $r>k(p)+1$, it follows that 

$M_{*}(p)=min(P_i:i\geq 1)=min(P_1,\ldots, P_{k(p)})\leq k(p)$.  So $i_{*}(p)\leq k(p)$ in this case as well. 
\end{proof} 

The conclusion of Lemma~\ref{lemma-opt} is that $min(P_1,\ldots, P_{k(p)})=M_{*}(p)$. 
If we had $i_{*}(p)<k(p)$ then $M_{*}(p)=P_{i_{*}(p)}=min(P_1,\ldots, P_{k(p)-1})> k(p)-1$, so $M_{*}(p)< k(p)<M_{*}(p)+1$, a contradiction.  We infer the fact that $i_{*}(p)=k(p)$, so $k(p)-1\leq P_{k(p)}=M_{*}(p)$, i.e. $k(p)\leq M_{*}(p)+1$. In fact we have $k(p)\leq M_{*}(p)$ except possibly in the case when we have equality in the previous chain of inequalities, implying the fact that $k(p)$ is a free day. 

We now deal with this last remaining case. We have $M_{*}(p)=P_{k(p)}=k(p)-1$. But this would contradict the hypothesis that no free day $d\leq M_{*}(p)+1$ exists (since $d=M_{*}(p)+1=k(p)$ is such a day). 

The second part of Lemma~\ref{lemma-opt} follows immediately: if $T<M_{*}(p)$ then $T<k(p)$ hence it is better to rent then to buy. If $T\geq M_{*}(p)$ then it is optimal to buy on day $i_{*}(p)=M_{*}(p)\leq T$, at price $M_{*}(p)$.

\subsection{Proof of Theorem~\ref{pessimal-no}}
 
 \begin{proof} 

The thresholds of the algorithm, $\theta_{i}(\widehat{T})$ depend on the value of the predictor and \textbf{not} on the (unknown) number of actual days $T$. On the other hand it is this latter number that can vary. So from our perspective for every value of $\widehat{T}$ we have a non-adaptive algorithm to analyze. We will, therefore, fix $\widehat{T}$ and write $\theta_i$ instead of $\theta_i(\widehat{T})$. 

Any algorithm such that $\theta_i<B$ for all days $i$ is $\infty$-competitive: just consider the sequence of prices $p_i=B$ for all $i$. On this sequence the algorithm will rent forever, when it could simply buy at a cost of $B$ on the first day. 

So we are left with considering all algorithms $A_r$ for which the first day such that  $\theta_{i}=B$ is day $r$.\footnote{strictly speaking there are multiple such algorithm. $A_r$ will refer to one of them, chosen arbitrarily.} 

Any algorithm such that $\theta_{1}>0$ is $\infty$-competitive: it is possible that $p_1=0$, so $OPT=0$, but the algorithm spends $\theta_{1}$.\footnote{here we take into the consideration our assumption: the fact that $p_1=0$ means that all other agents pledge a total amount of at least $B$; however, the agent's positive pledge is lost.}  

Similarly, for $r\geq 2$, any algorithm such that $\theta_{1}=0$, $\theta_{i}<B$ for all $i<r$ and $\theta_{r}=B$ is no better than $r-1+B$-competitive: consider the sequence of prices $p_1=1$, $p_i=\theta_{i}+1$ for $2\leq i<r$, $p_r=0$. For this sequence we have $OPT=1$, but the algorithm spends $r-1+B$. So its competitive ratio is at least $r-1+B$. This means that if $r>2$ then such an algorithm cannot be $(B+1)$-competitive.  

To complete the proof we have to show that the algorithm described in the theorem ($\theta_1=0, \theta_2=B$) is $(B+1)$-competitive. 

Assume $T_0\geq 1$ is the number of skiing days before a free day arrives (this event might not happen at all, in which case $T_0$ is simply the true number of skiing days): 
\begin{enumerate} 
\item If $T_0=1$ then the algorithm will rent for $1$ day, incurring a cost of $1$. 
\item If $T_0\geq 2$ then the algorithm buys on day $2$, at a total cost of $B+1$. 
\end{enumerate} 
On the other hand, given a sequence of prices $p_1,p_2,\ldots, p_{T_0}$ ($p_{T_0+1}=0$)  the optimum is described as follows: 
\begin{enumerate} 
\item If $T_0\leq B$ then it is optimal to rent for $T_0$ days or buy on the day $i\leq T_0$ with a minimal total cost $i-1+p_{i}$, whichever is better.  
\item If $T_0\geq B+1$ then it is never optimal to rent for at least $B+1$ days: one could buy on or before day $B$ at a lower cost, since $p_1\leq B$. Of course, the optimum is 
$min(i-1+p_i:1\leq i\leq B)$. 
\end{enumerate} 
In any case $OPT\geq 1$. So $ALG/OPT\leq (B+1)/1=B+1$. 
\end{proof}

 \subsection{Proof of Theorem~\ref{opt:self:noother}}
 
 \begin{proof} 
In this case the relevant algorithms are specified by a sequence of threshold prices $(\theta_{1},\theta_{2},\ldots, \theta_{\widehat{T}})$. $\theta_{i}\in \{0,\ldots, B\}$. 

By the same reasoning as the one in the proof of Theorem~\ref{pessimal-no}, 
all algorithms with $\theta_{1}>0$ are $\infty$-competitive. So the optimal algorithms must satisfy $\theta_{1}=0$. 

Consider an arbitrary sequence of prices $p_{1},p_{2},\ldots$ with $p_1>0$.  No optimal algorithm buys on day 1 (since $\theta_1=0$). We can modify the sequence into a sequence with a worse competitive ratio by setting $p_1=1$. Thus $OPT=1$. So we will assume that the worst-case sequences for the competitive ratio have $p_1=1$. 

\begin{claim} 
The competitive ratio of an algorithm $A$ with the properties above is 
\begin{align*}
& c(A)=max(\Theta_2,\ldots, \Theta_{\widehat{T}},\widehat{T})\mbox{ if }\theta_2,\ldots, \theta_{\widehat{T}}<B,\\ 
& c(A)=j-1+B\mbox{ if }\theta_{j}=B, j\leq \widehat{T}\mbox{ minimal with this property.}  
\end{align*}
\end{claim}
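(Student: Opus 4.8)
The plan is to read $\Theta_i$ as the total cost of committing to buy on day $i$, namely $\Theta_i := (i-1)+\theta_i$: by the over-pledging convention the agent rents on days $1,\dots,i-1$ (cost $i-1$) and then pays its full pledge $\theta_i$ on day $i$, regardless of the actual price $p_i\le\theta_i$. Recall from the preceding reduction that $\theta_1=0$, that worst-case sequences have $p_1=1$, and hence $OPT=1$ (indeed $P_1=1$ and $P_k=(k-1)+p_k\ge 1$ for $k\ge 2$). The single structural fact driving everything is that the algorithm buys on the \emph{first} day $i$ with $p_i\le\theta_i$, and since every price satisfies $p_i\le B$, as soon as some threshold equals $B$ the algorithm is \emph{forced} to buy on that day. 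This is exactly what separates the two cases of the claim.

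For the lower bounds I would exhibit two adversary families. The first forces an expensive purchase on a chosen target day $i$: set $p_1=1$, $p_k=\theta_k+1$ for $2\le k<i$, and $p_i=0$. Then the algorithm rents through day $i-1$ (because $p_k>\theta_k$ there) and buys on day $i$ at cost $\Theta_i$, while $OPT=1$ (buy on day $1$), giving ratio $\Theta_i$. The second, used only when all $\theta_k<B$, sets $p_1=1$ and $p_k=\theta_k+1$ for every $2\le k\le\widehat{T}$; the algorithm then rents all $\widehat{T}$ days at cost $\widehat{T}$ against $OPT=1$, giving ratio $\widehat{T}$. In case~1 these together yield $c(A)\ge\max(\Theta_2,\dots,\Theta_{\widehat{T}},\widehat{T})$; in case~2 the first family with $i=j$ (legal since $\theta_k<B$ for $k<j$) uses $\theta_j=B$ to give ratio $\Theta_j=j-1+B$.

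For the matching upper bounds I would argue, for an arbitrary sequence reduced to $p_1=1$ and $OPT=1$, that the algorithm either buys on some day $i$ (cost $\Theta_i$) or rents all $\widehat{T}$ days (cost $\widehat{T}$). In case~1 every such cost is at most $\max(\Theta_2,\dots,\Theta_{\widehat{T}},\widehat{T})$, so the ratio is bounded accordingly. In case~2 the forcing fact guarantees the algorithm buys by day $j$, so renting-forever cannot occur, and for every $i\le j$ we have $\Theta_i=(i-1)+\theta_i\le (j-1)+B=\Theta_j$ (using $i\le j$ and $\theta_i\le B$); hence the ratio is at most $j-1+B$, with equality realized by the lower bound.

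The hard part is bookkeeping the \textbf{legality of the adversarial prices} rather than any deep estimate: forcing a rental on day $k$ requires $p_k=\theta_k+1\le B$, i.e.\ $\theta_k<B$, which is precisely why the rent-forever threat is available in case~1 (all thresholds below $B$) but vanishes in case~2 once a threshold hits $B$. I would also take care to confirm $OPT=1$ in every construction (the inserted high/free prices never create a total cost below $P_1=1$) and to dispose of degenerate inputs ($p_1\neq 1$, or $p_1=0$ causing an immediate free purchase) via the earlier reduction, so that the stated $c(A)$ is genuinely the supremum of $\mathrm{cost}(A(p))/\mathrm{cost}(OPT(p))$ over all $p$.
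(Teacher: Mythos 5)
Your proposal is correct and takes essentially the same route as the paper's proof: the same reduction to worst-case sequences with $p_1=1$ and $OPT=1$, the same two adversary families (rent-forcing prices $\theta_k+1$ plus a forced purchase on the target day), and the same cost bookkeeping for the upper bounds. The only cosmetic difference is that the paper sets $p_i=\theta_i$ (resp.\ $p_j=B$) on the forced-purchase day rather than your $p_i=0$, which is equivalent under the paper's pledge-payment convention; your explicit upper-bound argument for the case $\theta_j=B$ (purchase forced by day $j$, and $\Theta_i\leq j-1+B$ for $i\leq j$) merely spells out what the paper leaves implicit.
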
 
\begin{proof} 
To prove the lower bound side of the inequality we need to display sequences of prices that force the algorithm buy at prices $\Theta_2,\ldots, \Theta_{\widehat{T}}$, respectively, or rent for $\widehat{T}$ days. 

This is easily accomplished: 
\begin{itemize} 
\item $p_1=1$, $p_2=\theta_2+1, \ldots, p_{j-1}=\theta_{j-1}+1, p_j=\theta_j$ for buying on day $j\leq \widehat{T}$. 
\item $p_1=1$, $p_2=\theta_2+1, \ldots,  p_{\widehat{T}}=\theta_{\widehat{T}}+1$, for renting. 
\end{itemize}
The upper bound is just as easy: the algorithm either buys on some day $2\leq j\leq \widehat{T}$ (at cost $P_j$, hence competitive ratio $P_j$, since $OPT=1$), or rents for $\widehat{T}$ days. 

For the second part of the claim take a sequence of prices with $p_1=1$, $p_{2}=\theta_{2}+1, \ldots, p_{j-1}=\theta_{j-1}+1, p_{j}=B$. The algorithm buys on day $j$, at cost $j-1+B$.  
\end{proof} 

Given the claim, the theorem easily follows: 
\begin{itemize} 
\item In the case $\widehat{T}\leq B$, $c(A)\geq \widehat{T}$ for algorithms of the first type, while algorithms of the second type have competitive ratio at least $B+1$. So the optimal value is $\widehat{T}$, which is obtained when $\Theta_{2}, \ldots, \Theta_{\widehat{T}}\leq \widehat{T}$. 
\end{itemize} 
\end{proof}
 
\subsection{Proof of Theorem~\ref{thm-full-free}}

\begin{proof}

\begin{itemize}[leftmargin=*] 
\item[\hspace{2mm}  (a).] Suppose that a free day $f$ shows up before any bargain day (and no later than day $M_{*}(p)$). The algorithm that just waits for the free day is 1-competitive: if $T< f$ then it is optimal to rent, and the agent does so. If $T\geq f$ then it is optimal to get the object for free on day $f$, and the algorithm does so. 

If, on the other hand a bargain $b$ day shows up right before a free day (and  no later than day $M_{*}(p)$) then the algorithm that just rents waiting for the bargain day when it buys is 1-competitive: if $T< b$ then it is optimal to rent, and the agent does so. If $T\geq b$ then it is optimal to get the object on day $b$, and the algorithm does so. 

\item[(b).] Essentially the same proof as for the classical ski rental: we need to compare all algorithms $A_r$ that buy on day $r$, together with the algorithm $A_R$ that always rents.

Consider algorithm $A_r$ that buys on day $r$. Let us compute the competitive ratio of $A_r$. Denote by $T$ the number of actual days. 

\begin{lemma} 
Algorithm $A_r$ is $\frac{P_r}{OPT_r}$-competitive. 
\end{lemma}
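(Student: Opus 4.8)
The plan is to express the cost of $A_r$ explicitly as a function of the true active time $T$ and then maximize the ratio $cost(A_r(T))/OPT_T$ over $T$. Since $A_r$ rents until day $r$ and buys exactly on day $r$ (if still active), its cost is $T$ when $T < r$ (it only rents, never reaching day $r$) and $P_r = (r-1)+p_r$ when $T \ge r$ (it rents $r-1$ days and then buys). The case-(b) hypothesis that no free day occurs up to day $M_{*}(p)+1$ is what guarantees this two-piece formula, since nothing short-circuits the renting phase before day $r$. I would then show that $\sup_T \frac{cost(A_r(T))}{OPT_T} = \frac{P_r}{OPT_r}$, with the supremum attained at $T = r$, where $cost(A_r(r)) = P_r$ and the ratio is exactly $\frac{P_r}{OPT_r}$; this already delivers the lower bound on the competitive ratio.

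For the regime $T \ge r$ the numerator is the constant $P_r$, so it suffices to prove that $OPT_T$ is non-decreasing in $T$, which yields $OPT_T \ge OPT_r$ and hence $\frac{P_r}{OPT_T} \le \frac{P_r}{OPT_r}$. I would establish monotonicity by a one-step argument on $OPT_T = \min(T, m_T)$ with $m_T := \min_{i \le T} P_i$: if the running minimum does not drop at step $T+1$ then $OPT_{T+1} = \min(T+1, m_T) \ge OPT_T$, while if it drops to $m_{T+1} = P_{T+1}$, then $P_{T+1} = T + p_{T+1} \ge T$ forces $OPT_{T+1} = \min(T+1, P_{T+1}) \ge T \ge OPT_T$.

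For the regime $T < r$ the numerator is $T$. If $OPT_T = T$ the ratio equals $1 \le \frac{P_r}{OPT_r}$, using $OPT_r \le m_r \le P_r$. Otherwise $OPT_T = m_T < T$, and I would combine two elementary bounds: first, $P_r = (r-1)+p_r \ge r-1 \ge T$ since $T < r$; second, $OPT_r \le m_r \le m_T$, because $\{i \le T\} \subseteq \{i \le r\}$ makes the running minimum over the larger range no larger. Together these give $\frac{T}{OPT_T} = \frac{T}{m_T} \le \frac{P_r}{m_T} \le \frac{P_r}{OPT_r}$, completing the upper bound.

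The main obstacle here is organizational rather than deep: the entire argument hinges on the two structural facts $OPT_T \ge OPT_r$ for $T \ge r$ (monotonicity) and $OPT_r \le m_T$ for $T < r$, both of which follow from the description of $OPT_t$ in the Observation preceding Lemma~\ref{lemma-opt} together with the inequality $P_{t+1} \ge t$. Once these are in hand, the attainment at $T = r$ is immediate, so no further optimization is required and the competitive ratio of $A_r$ is exactly $\frac{P_r}{OPT_r}$.
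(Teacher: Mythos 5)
Your proof is correct, but it is organized quite differently from the paper's. The paper evaluates $OPT_T$ explicitly via Lemma~\ref{lemma-opt} (so $OPT_T$ equals $T$ or $M_{*}(p)$ under the case-(b) hypothesis) and then runs an exhaustive case analysis on the position of $r$ relative to $T$ and $M_{*}(p)$ --- six subcases in all, each yielding the bound directly. You instead split only on whether $A_r$ buys ($T\geq r$) or rents ($T<r$), and reduce both regimes to two structural monotonicity facts: $OPT_T=\min(T,m_T)$ is non-decreasing in $T$ (your one-step argument, which needs only $p_{T+1}\geq 0$), and the running minimum $m_T$ is non-increasing, giving $OPT_r\leq m_r\leq m_T$ for $T<r$. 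This is arguably cleaner and more general: your upper-bound argument never invokes Lemma~\ref{lemma-opt} and hence does not actually need the case-(b) hypothesis, whereas the paper's computation does. What the paper's route buys is explicit per-regime formulas for the ratio (e.g.\ $\frac{P_r}{M_{*}(p)}$ when $r\geq M_{*}(p)$), which make the subsequent comparison across $r$ in Theorem~\ref{thm-full-free}(b) immediate; your route buys robustness and brevity. One small inaccuracy to fix: you attribute the two-piece cost formula ($T$ if $T<r$, $P_r$ if $T\geq r$) to the case-(b) hypothesis, but that hypothesis only excludes free days up to day $M_{*}(p)+1$; a free day \emph{after} that point could still preempt $A_r$ when $r$ is large (the paper itself raises this preemption issue right after the lemma). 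This does not damage your claim, since preemption only lowers the algorithm's cost and thus preserves the upper bound --- only the exactness of the ratio at $T=r$ could fail --- but the justification as written is misattributed.
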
 
\begin{proof} 
For $T=r$ the competitive ratio of algorithm $A_r$ is indeed 
$\frac{P_r}{OPT_r}$. We need to show that this is, indeed, the worst case, by computing the competitive ratio of algorithm $A_r$ for an arbitrary active time $T$. 

\textbf{Case 1: $T\leq M_{*}(p)$. Then, by Lemma~\ref{lemma-opt}, $OPT_T=T$.} 
\begin{itemize}[leftmargin=*]
\item[-] If $r\leq T$ then the competitive ratio is $\frac{P_r}{T}\leq \frac{P_r}{r}=\frac{P_r}{OPT_r}$. 
\item[-] If $T< r\leq M_{*}(p)$ then the competitive ratio is $\frac{T}{T}=1$.  
\item[-] If $r> M_{*}(p)$ then the competitive ratio is $\frac{T}{T}=1$. 
\end{itemize} 

Therefore, for $T\leq M_{*}(p)$ the worst competitive ratio of $A_r$ is $\frac{P_r}{OPT_r}$. 

\textbf{Case 2: $T>M_{*}(p)$. Then, by Lemma~\ref{lemma-opt},  $OPT_T=M_{*}(p)$.} 
\begin{itemize}[leftmargin=*]
\item[-] If $M_{*}(p)\leq r\leq T$ then the competitive ratio is $\frac{P_r}{M_{*}(p)}=\frac{P_r}{OPT_r}$. 
\item[-] If $r\leq M_{*}(p)<T$ then the competitive ratio is $\frac{P_r}{M_{*}(p)}\leq \frac{P_r}{r}=\frac{P_r}{OPT_r}$.  
\item[-] If $r> T>M_{*}(p)$ then the competitive ratio is $\frac{T}{M_{*}(p)}\leq \frac{r}{M_{*}(p)}=\frac{r}{OPT_r}$. 
\end{itemize} 

\end{proof} 

To find the optimal algorithm we have to compare the various ratios $\frac{P_r}{OPT_r}$. Note that 
for $r\geq M_{*}(p)$ only the algorithms $A_r$ such that $P_{r}=Q_{M_{*}(p)}$ are candidates for the optimal algorithm, since the numerator of the fraction $\frac{P_r}{OPT_r}$ is $M_{*}(p)$. 

Therefore an optimal algorithm $A_T$ has competitive ratio 
\[
c_{OPT}(p)=min(\frac{P_r}{r}:1\leq r\leq M_{*}(P), \frac{Q_{M_{*}(p)}}{M_{*}(p)})
\]

We must show the converse, that for every $T$ that realizes the optimum the algorithm $A_T$ is optimal. This is not self evident, since, e.g. on some sequences $p$ day $T$ could be preempted by  a free day $d$, so the algorithm would never reach day $T$ for buying. 

There is one case that must receive special consideration: suppose one of the optimal algorithms involves waiting for the first free day $d>M_{*}(p)+1$, i.e. $Q_{M_{*}(p)}=d-1$. We claim that there is no day $d^{\prime}>d$ such that $P_{d^{\prime}}=Q_{M_{*}(p)}$ (i.e. $d$ is the last optimal day; there can be other optimal days after $M_{*}(p)+1$ but before $d$). The reason is that $P_{d^{\prime}}\geq d^{\prime}-1>d-1=Q_{M_{*}(p)}$. So all optimal strategies are realizable. 
\end{itemize} 
\end{proof} 

\begin{example} 
For an example that shows that the second term is needed, consider a situation where $B=100$, $p_1=\ldots = p_100=100$, but $p_{101}=p_{102}=\ldots = 2$. The optimal algorithm will buy on day 101, i.e. beyond day $M_{*}(p)=100$. 
\end{example}

  \subsection{Proof of Theorem~\ref{thm:blindly}}

\begin{proof} 
The first part is clear: on any input $(P,T=\widehat{T})$ the algorithm will choose the optimal action. 
For the second part, if $\widehat{T}=1$ then the algorithm will rent for $T$ days, while OPT pays at most $P_1\leq M$. So the competitive ratio is at least $\frac{T}{M}$. Making $T\rightarrow \infty$ we get the desired conclusion.
\end{proof}

  \subsection{Proof of Theorem~\ref{thm:upper}}
 Assume first that $p$ is such that $f_2(p)=\infty$. 
 
 Consider a prediction when $\widehat{T}<M_{*}(p)$. e.g. $\widehat{T}=0$. Since $f_2(p)=\infty$, for every $T\geq 1$ 
the algorithm will rent for $T$ steps. For $T\geq M_{*}(p)$ $OPT_{T}=T$. The competitive ratio in this case is $\frac{T}{M_{*}(p)}$, which is unbounded.

 Assume now that $f_2(p)<\infty$. We analyze the various possible cases in the algorithm.  Some of these cases will be compatible with the hypothesis $\eta=0$, i.e. with equality $\widehat{T}=T$. Their competitive ratios will contribute to the final upper bound on consistency. The rest of the bounds only contribute to the final upper bound on robustness. 
 
 \begin{itemize}[leftmargin=*] 
\item{Case 1:} $\widehat{T}\geq M_{\widehat{T}}(P)$, $T\geq M_{*}(P)\geq f_1(p)$ (compatible with $\eta = 0$).  $ALG=P_{f_1(p)}$, $OPT=M_{*}(P)$.
 
\item{Case 2:} $\widehat{T}\geq M_{\widehat{T}}(P)$, $T\geq f_1(p)>M_{*}(P)$ (compatible with $\eta = 0$).  $ALG=P_{f_1(p)}$, $OPT=M_{*}(P)$. 

\item{Case 3:} $\widehat{T}\geq M_{\widehat{T}}(P)$, $f_1(p) >T \geq M_{*}(P)$ (compatible with $\eta = 0$.) $ALG=T\leq P_{f_1(p)}$ (since $T\leq f_1(p)-1\leq P_{f_{1}(p)})$, $OPT=M_{*}(P)$. 

\item{Case 4:} $\widehat{T}< M_{\widehat{T}}(P)$, $T<f_2(p)\leq M_{*}(P)$ (compatible with $\eta = 0$.).  $ALG=T$, $OPT=T$. 
\item{Case 5:} $\widehat{T}< M_{\widehat{T}}(P)$, $f_2(p)\leq T< M_{*}(P)$ (compatible with $\eta = 0$). $ALG=P_{f_2(p)}$, $OPT=T\geq f_2(p)$.
 
\item{Case 6:} $\widehat{T}\geq M_{\widehat{T}}(P)$, $T<f_1(p)\leq M_{*}(P)$. (incompatible with $\eta = 0$.) $ALG=T$, $OPT=T$.
 
\item{Case 7:} $\widehat{T}\geq M_{\widehat{T}}(P)$, $f_1(p)\leq T<M_{*}(P)$ (incompatible with $\eta = 0$).  $ALG=P_{f_1(p)}$, $OPT=T\geq f_1(p)$. 

\item{Case 8:} $\widehat{T}< M_{\widehat{T}}(P)$, $f_2(p), M_{*}(P)\leq T$ (incompatible with $\eta = 0$). $ALG=P_{f_2(p)}$, $OPT=M_{*}(P)$. 

\item{Case 9:} $\widehat{T}< M_{\widehat{T}}(P)$, $f_2(p)>T$, $M_{*}(P)\leq T$ (incompatible with $\eta = 0$). $ALG=T\leq f_{2}(p)-1\leq P_{f_2(p)}$, $OPT=M_{*}(P)$.
\end{itemize} 

The competitive ratios (different from 1) compatible with $\eta=0$ are:   
\begin{equation} 
\frac{P_{f_1(p)}}{M_{*}(P)}, \frac{P_{f_2(p)}}{f_2(p)}
\end{equation} 

The second bound is compatible, however, \textbf{only when $f_2(p)< M_{*}(p)$.}
 Thus the algorithm is $max(\frac{P_{f_1(p)}}{M_{*}(P)}, \frac{P_{f_2(p)}}{f_2(p)})$-consistent when $f_{2}(p)< M_{*}(p)$,  $\frac{P_{f_1(p)}}{M_{*}(P)}$-consistent, otherwise. 
 
 The upper bounds competitive ratios (different from 1) incompatible with $\eta=0$ are:   
\begin{equation} 
\frac{P_{f_1(p)}}{f_1(p)},\frac{P_{f_2(p)}}{M_{*}(p)}
\end{equation}

Thus the algorithm is $max(\frac{P_{f_1(p)}}{M_{*}(P)}, \frac{P_{f_2(p)}}{f_2(p)}, \frac{P_{f_1(p)}}{f_1(p)},\frac{P_{f_2(p)}}{M_{*}(p)})$-robust. So the final bound on robustness is $max(\frac{P_{f_1(p)}}{min(f_1(p),M_{*}(p))},\frac{P_{f_2(p)}}{min(f_2(p),M_{*}(p))})= max(\frac{P_{f_1(p)}}{OPT_{f_1(p)}(p)},\frac{P_{f_2(p)}}{OPT_{f_2(p)}(p)})$

\subsection{Proof of Theorem~\ref{alg-predictions}}

We first revisit Example~1 in the paper, adding to it some extra quantities from the algorithm: 

\begin{example} 
In the classical, fixed-cost case of ski-rental, $P_i=B+i-1$. We have $k(p)=M_{*}(p)=B$, $i_{*}(p)=1$, $i_{*}(p)=1$ irrespective of the value of $\widehat{T}$ and $r_1(p)=B$. 
\label{ex2}
\end{example} 

We start by proving some useful auxiliary results. First, in the conditions of Example~1 in the paper we had $r_{0}(p)\leq r_1(p)\leq M_{*}(p)$. This generalizes, although not completely: 

\begin{lemma} 
If there is no free day $f\leq M_{*}(p)+1$ then we have $r_{0}(p)\leq r_1(p)$.
\end{lemma}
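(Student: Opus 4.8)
The plan is to show that every index $r$ strictly below $r_0$ has a strictly worse competitive ratio than $r_0$ itself, so that the smallest minimizer defining $r_1$ cannot lie below $r_0$. The first and most important step is to locate $r_0$ relative to $M_*(p)$: I would prove that the no-free-day hypothesis forces $r_0 \leq M_*(p)$. Indeed, if some $r$ with $P_r = M_*(p)$ had $r > M_*(p)$, then from $P_r = r-1+p_r = M_*(p)$ and $r \geq M_*(p)+1$ one gets $p_r = M_*(p)-r+1 \leq 0$, forcing $p_r = 0$ and $r = M_*(p)+1$; that is, day $M_*(p)+1$ would be a free day, contradicting the hypothesis. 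This is the only place the hypothesis enters, and it is the crux of the argument.

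Next I would pin down $OPT_r$ for the relevant indices. For $r < M_*(p)$ we have $M_r(p) = \min\{P_i : i \leq r\} \geq M_*(p) > r$, so $OPT_r = \min(r, M_r(p)) = r$; and at $r = M_*(p)$, since $r_0 \leq M_*(p)$ already witnesses that the global minimum of $P$ is attained at a day $\leq M_*(p)$, one gets $OPT_{M_*(p)} = \min(M_*(p), M_*(p)) = M_*(p)$. In particular, in both the case $r_0 < M_*(p)$ and the edge case $r_0 = M_*(p)$ we obtain the clean identity $OPT_{r_0} = r_0$, so the ratio attached to $r_0$ is simply $P_{r_0}/OPT_{r_0} = M_*(p)/r_0$.

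With these facts in hand, the key inequality is immediate: for any $r < r_0$ we have $r < r_0 \leq M_*(p)$, hence $OPT_r = r$, and using $P_r \geq M_*(p) = P_{r_0}$ together with $r < r_0$,
\[
\frac{P_r}{OPT_r} = \frac{P_r}{r} \geq \frac{P_{r_0}}{r} > \frac{P_{r_0}}{r_0} = \frac{P_{r_0}}{OPT_{r_0}} \geq c_{OPT}(p).
\]
Thus no index strictly below $r_0$ attains the minimum value $c_{OPT}(p)$ that defines $r_1$, and since the second set appearing in the definition of $r_1$ contributes only indices $r \geq M_*(p) \geq r_0$, the smallest minimizer satisfies $r_1 \geq r_0$. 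The main obstacle is not the final inequality (which is routine once the setup is laid out) but the bookkeeping around the boundary index $r = M_*(p)$: one must confirm that $r_0 \leq M_*(p)$ really follows from excluding free days up to $M_*(p)+1$, and that the identity $OPT_{r_0} = r_0$ survives the edge case $r_0 = M_*(p)$ (where $r_0$ is forced to be a bargain day and the two pieces of the definition of the competitive ratio must be checked to agree). Once these boundary points are verified, the strict inequality above closes the argument.
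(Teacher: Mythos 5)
Your proof is correct and follows essentially the same route as the paper's: both hinge on showing $r_{0}(p)\leq M_{*}(p)$ from the no-free-day hypothesis, then using $P_{r}\geq P_{r_{0}}=M_{*}(p)$ together with $OPT_{r}=r$ for $r\leq M_{*}(p)$ to compare ratios. The only difference is presentational—you argue directly that no index below $r_{0}$ can be a minimizer, while the paper derives the contradiction $1\leq \frac{P_{r_1}}{P_{r_0}}\leq \frac{r_1}{r_0}<1$ from the assumption $r_{1}<r_{0}$—which is the same inequality chain read in the opposite direction.
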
 
\begin{proof} 


Clearly $r_0\leq M_{*}(p)$: indeed, for all $r\geq M_{*}(p)+2$ we have $P_{r}\geq M_{*}(p)+1$. Since day $M_{*}(p)+1$ is not free, $P_{M_{*}(p)+1}\geq M_{*}(p)+1$ as well. The claim follows. 

Suppose that $r_1<r_0$. Then, by definition of $r_0$, $P_{r_1}\geq P_{r_0}$. But, by the definition of $r_1$, 
\[
\frac{P_{r_1}}{OPT_{r_1}}\leq \frac{P_{r_0}}{OPT_{r_0}}. 
\]
Since we assumed that $r_1< r_0\leq M_{*}(p)$, this yields 
\[
\frac{P_{r_1}}{r_1}\leq \frac{P_{r_0}}{r_0},
\]
hence 
\[
1\leq \frac{P_{r_1}}{P_{r_0}}\leq \frac{r_1}{r_0}<1,\mbox{ a contradiction.}  
\]
The only remaining possibility is that $r_0\leq r_1$. 
\end{proof} 

\begin{lemma} 
We have $P_{r_1}=Q_{r_1}$ and $P_{r_1}-\lambda OPT_{r_1}=min(P_{i}-\lambda OPT_{i}:i\geq r_1)$.  
\end{lemma}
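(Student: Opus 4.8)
The equality $P_{r_1}=Q_{r_1}$ unfolds to the assertion that $P_i\ge P_{r_1}$ for every $i\ge r_1$, and the second equality unfolds to $P_i-\lambda OPT_i\ge P_{r_1}-\lambda OPT_{r_1}$, i.e. $P_i-P_{r_1}\ge \lambda\bigl(OPT_i-OPT_{r_1}\bigr)$, for every $i\ge r_1$. Since the input $p$ is not of the form treated in case (a) of Theorem~\ref{thm-full-free}, the hypotheses of Lemma~\ref{lemma-opt} are met, so throughout I may use $OPT_t=\min(t,M_{*}(p))$; in particular $OPT$ is non-decreasing, so the right-hand side $\lambda(OPT_i-OPT_{r_1})$ is non-negative. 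Hence the second statement is strictly stronger than the first, and my plan is to prove the first as a warm-up and then upgrade the argument. Abbreviate $M=M_{*}(p)$ and $c=c_{OPT}(p)=P_{r_1}/OPT_{r_1}$ (the last equality holds because $r_1$ is, by definition, an index attaining the minimum in the formula of Theorem~\ref{thm-full-free}).

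For the first equality the single lever is the minimality of the ratio $c$ together with three elementary facts: $P_i\ge M$ for all $i$ (as $M$ is the global minimum of $P$); $P_i\ge Q_{M_{*}(p)}$ whenever $i> M$; and $Q_{M_{*}(p)}\ge cM$, which holds because some day $j^{*}\ge M$ with $P_{j^{*}}=Q_{M_{*}(p)}$ is one of the candidates in the definition of $r_1$, so $Q_{M_{*}(p)}/M\ge c$. I would split on the position of $r_1$. If $r_1>M$ then $r_1$ comes from the second set in the definition, so $P_{r_1}=Q_{M_{*}(p)}$; for $i>r_1>M$ this gives $P_i\ge Q_{M_{*}(p)}=P_{r_1}$ directly. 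If $r_1\le M$ then $c=P_{r_1}/r_1$ and $P_{r_1}\ge M\ge r_1$ forces $c\ge 1$; for $r_1<i\le M$ the candidacy of $i$ gives $P_i/i\ge P_{r_1}/r_1$, whence $P_i\ge (i/r_1)P_{r_1}\ge P_{r_1}$, while for $i>M$ the chain $P_i\ge Q_{M_{*}(p)}\ge cM\ge cr_1=P_{r_1}$ finishes the case.

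For the second equality I reuse the same case split, now comparing $P_i-P_{r_1}$ against $\lambda(OPT_i-OPT_{r_1})$. When $r_1>M$ both $OPT_i$ and $OPT_{r_1}$ equal $M$, so the right-hand side vanishes and the inequality reduces to $P_i\ge P_{r_1}$, already established. When $r_1\le M$, the key quantitative input is that the slope of $P$ dominates: concretely, from $c\ge 1\ge\lambda$ I get, for $r_1<i\le M$, $P_i-P_{r_1}\ge c(i-r_1)\ge\lambda(i-r_1)=\lambda(OPT_i-OPT_{r_1})$ (the first inequality being the cross-multiplied form of $P_i/i\ge P_{r_1}/r_1$); and for $i>M$, $P_i-P_{r_1}\ge Q_{M_{*}(p)}-P_{r_1}\ge cM-cr_1=c(M-r_1)\ge\lambda(M-r_1)=\lambda(OPT_i-OPT_{r_1})$.

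The step requiring the most care is the \emph{mixed} range $r_1\le M<i$, and more broadly the genuine possibility that $r_1>M$ (this really occurs, cf. the example following Theorem~\ref{thm-full-free}). In both, the argument hinges on the inequality $Q_{M_{*}(p)}\ge cM$, i.e. on the fact that the best ``late'' buying day is itself one of the competitors defining $r_1$; getting this bookkeeping right --- and noticing that $c\ge1$ (so that $\lambda\le1$ can be absorbed) comes for free from $P_{r_1}\ge M\ge r_1$ --- is the crux. Everything else is routine cross-multiplication.
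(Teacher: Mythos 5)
Your proposal is correct, but it takes a different route than the paper. The paper's proof is a uniform two-line ratio manipulation with no case analysis: from the minimality of $r_1$ it asserts $\frac{P_{r_1}}{OPT_{r_1}}\leq \frac{P_i}{OPT_i}$ for all $i\geq r_1$, then divides to get $\frac{P_{r_1}}{P_i}\leq \frac{OPT_{r_1}}{OPT_i}\leq 1$ (giving $P_{r_1}=Q_{r_1}$), and for the second claim subtracts $\lambda$ from both ratios and repeats the same division trick to get $\frac{P_{r_1}-\lambda OPT_{r_1}}{P_i-\lambda OPT_i}\leq \frac{OPT_{r_1}}{OPT_i}\leq 1$; the only external input is the monotonicity of $OPT_t=\min(t,M_{*}(p))$ from Lemma~\ref{lemma-opt}. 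You instead observe that the second claim implies the first, and prove the second by an explicit case split on the positions of $r_1$ and $i$ relative to $M_{*}(p)$, working with differences rather than ratios and explicitly deriving $c_{OPT}(p)\geq 1$ and $Q_{M_{*}(p)}\geq c_{OPT}(p)\,M_{*}(p)$. What your route buys: it makes rigorous a point the paper glosses over, namely that indices $i>M_{*}(p)$ with $P_i>Q_{M_{*}(p)}$ are \emph{not} candidates in the definition of $r_1$, so the ratio inequality for such $i$ needs the extra chain $P_i\geq Q_{M_{*}(p)}\geq c_{OPT}(p)M_{*}(p)$ that you supply; it also avoids any worry about dividing by $P_i-\lambda OPT_i$ when that quantity could vanish. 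What the paper's route buys: brevity, no case analysis, and no need for the auxiliary facts $c_{OPT}(p)\geq 1$ or $\lambda\leq 1$ in the first claim, since everything follows from one algebraic rearrangement applied twice.
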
 

\begin{proof} 
By the optimality of $r_1$ for the competitive ratio, for all $i\geq r_1$ we have 
\[
\frac{P_{r_1}}{OPT_{r_1}}\leq \frac{P_i}{OPT_i}. 
\]
So $P_{r_1}/P_{i}\leq OPT_{r_1}/OPT_{i}\leq 1$, which shows that $P_{r_1}\leq P_i$, hence $P_{r_1}=Q_{r_1}$. The inequality $OPT_{r_1}\leq OPT_i$ follows from Lemma~\ref{lemma-opt} and the fact that $r_1\leq i$. 

As for the second inequality, we write 
\[
\frac{P_{r_1}}{OPT_{r_1}}-\lambda \leq \frac{P_i}{OPT_i}- \lambda. 
\]
So $\frac{P_{r_1}- \lambda OPT_{r_1}}{P_{i}-\lambda OPT_{i}}\leq \frac{OPT_{r_1}}{OPT_i}\leq 1$ for $i\geq r_1$. 

In conclusion $P_{r_1}- \lambda OPT_{r_1}\leq P_{i}-  \lambda OPT_{i}$ for all $i\geq r_1$. 
\end{proof} 

\begin{corollary} 
$r_2$ in Algorithm~1 in the main paper satisfies $r_2\leq r_1$. 
\end{corollary}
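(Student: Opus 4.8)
The plan is to compare the lower endpoint of the search range defining $r_2$ against $r_1$, and then to exploit the minimality property of $r_1$ for the shifted objective $P_t - \lambda\, OPT_t$. Write $L := \lceil (1-\lambda)(r_0 - 1) + \lambda r_1\rceil$ for the first day over which the minimum defining $r_2$ is taken, so that (before the final correction) $r_2$ is the smallest index $t \geq L$ attaining $\min\{P_t - \lambda\, OPT_t : t \geq L\}$.

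First I would show $L \leq r_1$. We are in the regime of Theorem~\ref{thm-full-free}(b) (no free day $\leq M_{*}(p)+1$), so the preceding lemma gives $r_0 \leq r_1$, whence $r_0 - 1 \leq r_1 - 1$. Since $\lambda \in (0,1]$, a convex-combination estimate yields $(1-\lambda)(r_0-1) + \lambda r_1 \leq (1-\lambda)(r_1 - 1) + \lambda r_1 = r_1 - (1-\lambda) \leq r_1$; as $r_1$ is an integer, taking ceilings preserves the inequality, so $L \leq r_1$. In particular $r_1$ itself lies in the search window $\{t : t \geq L\}$.

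The second, and main, step uses the lemma stating $P_{r_1} - \lambda\, OPT_{r_1} = \min\{P_i - \lambda\, OPT_i : i \geq r_1\}$, which says that no index $t > r_1$ can strictly beat $r_1$ on the objective $P_t - \lambda\, OPT_t$. Hence, if the windowed minimum over $\{t \geq L\}$ were strictly smaller than $P_{r_1} - \lambda\, OPT_{r_1}$, every minimizer would have to lie strictly below $r_1$, forcing the first minimizer $r_2 < r_1$; and if that minimum equals $P_{r_1} - \lambda\, OPT_{r_1}$, then $r_1$ is itself a minimizer and the first minimizer satisfies $r_2 \leq r_1$. Either way the pre-correction $r_2$ is at most $r_1$, and the correction step (``if $P_{r_2} > P_{r_1}$ let $r_2 = r_1$'') can only reset $r_2$ to $r_1$, so the bound $r_2 \leq r_1$ is preserved.

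I expect the only delicate points to be the bookkeeping around the ceiling and the integrality of $r_1$ in the first step, together with keeping the case split in the second step clean (distinguishing whether the windowed minimum is strictly below, or equal to, the value at $r_1$). Neither is a genuine obstacle: the lemma $r_0 \leq r_1$ and the monotonicity lemma for $P_t - \lambda\, OPT_t$ do all the real work, so the corollary follows almost immediately.
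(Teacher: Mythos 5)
Your proof is correct and follows essentially the same route the paper intends: the paper states this as an immediate corollary of the two preceding lemmas ($r_0 \leq r_1$ under the no-free-day hypothesis of case (b), and $P_{r_1}-\lambda\, OPT_{r_1}=\min\{P_i-\lambda\, OPT_i : i\geq r_1\}$), and your argument is precisely that derivation with the details spelled out. The bookkeeping you add (the ceiling bound $\lceil(1-\lambda)(r_0-1)+\lambda r_1\rceil\leq r_1$ via integrality of $r_1$, the case split on whether the windowed minimum is strictly below or equal to the value at $r_1$, and the observation that the correction step only resets $r_2$ to $r_1$) is exactly what the paper leaves implicit.
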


\begin{lemma} 
$\frac{P_{r_2}}{r_2}\leq \frac{1}{\lambda} c_{OPT}(p)+\lambda - 1$. 
\label{lemma-ub-r2}
\end{lemma}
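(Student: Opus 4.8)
The plan is to turn the lemma into a lower bound on $r_2$ and let the defining minimization of $r_2$ do the rest. Write $c=c_{OPT}(p)=P_{r_1}/OPT_{r_1}$ and record that $c\geq 1$ (for $r\leq M_{*}(p)$ we have $P_r\geq M_{*}(p)\geq r$, and $Q_{M_{*}(p)}\geq M_{*}(p)$, so every term in the minimum defining $c_{OPT}(p)$ is $\geq 1$). Since $r_2$ is computed only in the \emph{else} branch, we are in case (b) of Theorem~\ref{thm-full-free}, so there is no free day $\leq M_{*}(p)+1$ and the preceding lemma gives $r_0\leq r_1$. Hence $(1-\lambda)(r_0-1)+\lambda r_1\leq r_1$, so the threshold $L:=\lceil (1-\lambda)(r_0-1)+\lambda r_1\rceil$ satisfies $L\leq r_1$. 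Consequently $r_1$ is a feasible index in the minimization defining $r_2$, and I claim the inequality $P_{r_2}-\lambda\,OPT_{r_2}\leq P_{r_1}-\lambda\,OPT_{r_1}$ holds either way: if $r_2$ is the minimizer it follows because $r_1\geq L$ is a competitor, and if $r_2$ was reset to $r_1$ it holds with equality. In both branches $r_2\geq L$, and by the corollary $r_2\leq r_1$.

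I would then split on the position of $r_2$ relative to $M_{*}(p)$. If $r_2>M_{*}(p)$, then $r_1\geq r_2>M_{*}(p)$, so $OPT_{r_2}=OPT_{r_1}=M_{*}(p)$ and the key inequality collapses to $P_{r_2}\leq P_{r_1}$. Since $r_2>M_{*}(p)$ this yields $\frac{P_{r_2}}{r_2}\leq \frac{P_{r_1}}{r_2}\leq \frac{P_{r_1}}{M_{*}(p)}=c$, and it remains to check $c\leq \frac1\lambda c+\lambda-1$; this rearranges to $(1-\lambda)\frac{c-\lambda}{\lambda}\geq 0$, which is true because $0<\lambda\leq 1$ and $c\geq 1\geq\lambda$.

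The substantive case is $r_2\leq M_{*}(p)$, where $OPT_{r_2}=r_2$. The key inequality now reads $P_{r_2}-\lambda r_2\leq P_{r_1}-\lambda\,OPT_{r_1}=(c-\lambda)\,OPT_{r_1}$, so dividing by $r_2$ gives $\frac{P_{r_2}}{r_2}\leq \frac{(c-\lambda)\,OPT_{r_1}}{r_2}+\lambda$. Since the target equals $\frac1\lambda c+\lambda-1=\frac{c-\lambda}{\lambda}+\lambda$, it suffices to show $\frac{(c-\lambda)\,OPT_{r_1}}{r_2}\leq \frac{c-\lambda}{\lambda}$; when $c>\lambda$ this is exactly $r_2\geq\lambda\,OPT_{r_1}$, while the boundary case $c=\lambda$ forces $\lambda=1$ and the bound reduces to $\frac{P_{r_2}}{r_2}\leq 1$, matching the target. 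I would finish by proving $r_2\geq\lambda\,OPT_{r_1}$ from the chain $r_2\geq L\geq (1-\lambda)(r_0-1)+\lambda r_1\geq \lambda r_1\geq \lambda\,OPT_{r_1}$, using $r_0\geq 1$ and $OPT_{r_1}=\min(r_1,M_{*}(p))\leq r_1$.

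The main obstacle is lining up the lower bound on $r_2$ — which comes from the ceiling threshold $L$ and the convex combination of $r_0-1$ and $r_1$ — precisely with the quantity $\lambda\,OPT_{r_1}$ needed to cancel the stray $1/r_2$ factor. The two places requiring care are the reset clause $r_2\leftarrow r_1$ and the possibility $r_1>M_{*}(p)$, but the argument above is organized so that both are absorbed automatically: the key inequality survives the reset (as equality), and $r_1>M_{*}(p)$ is exactly the first case, so no separate treatment is needed.
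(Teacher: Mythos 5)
Your proof is correct, and it follows the same skeleton as the paper's argument: use $r_1$ as a competitor in the minimization defining $r_2$ (feasible because $r_0\leq r_1$ forces the threshold $\lceil(1-\lambda)(r_0-1)+\lambda r_1\rceil\leq r_1$), divide the resulting inequality by $r_2$, and close with the lower bound $r_2\geq\lambda r_1\geq \lambda\, OPT_{r_1}$ together with $c_{OPT}(p)=P_{r_1}/OPT_{r_1}$. Where you genuinely depart from the paper is the bookkeeping: the paper opens by asserting, ``by the definition of $r_2$,'' that $P_{r_2}-\lambda r_2\leq P_{r_1}-\lambda r_1$, whereas Algorithm~\ref{zeroprime-alg} minimizes $P_t-\lambda\, OPT_t$, not $P_t-\lambda t$; you instead keep the quantity $P_t-\lambda\, OPT_t$ throughout and split on $r_2\lessgtr M_{*}(p)$, lower-bounding $r_2$ by $\lambda\, OPT_{r_1}$ rather than $\lambda r_1$, with the reset clause absorbed as an equality and the boundary $c_{OPT}(p)=\lambda$ treated separately.

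This extra care is not cosmetic: the paper's opening inequality is actually false in general, so your case analysis is what makes the argument sound. Concretely, take $p_1=10$, $p_5=16$, $p_{20}=3$ and $p_r=50$ otherwise (so $P_1=10$, $P_5=20$, $P_{20}=22$, all other $P_r\geq 51$, and $B=50$). Then $M_{*}(p)=10$, $r_0=1$, $Q_{M_{*}(p)}=22$, $r_1=20$, $c_{OPT}(p)=2.2$, and for $\lambda=\tfrac14$ the threshold is $5$ and the algorithm selects $r_2=5$ (no reset, since $P_5\leq P_{20}$); yet $P_{r_2}-\lambda r_2=18.75>17=P_{r_1}-\lambda r_1$, so the paper's first step, and the chain built on it, breaks, even though the conclusion of Lemma~\ref{lemma-ub-r2} still holds ($4\leq 8.05$). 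The failure occurs precisely in the configuration $r_2\leq M_{*}(p)<r_1$ that your first/second case split was designed to handle (and your version also avoids the paper's unstated sign condition $P_{r_1}-\lambda r_1\geq 0$ needed when it enlarges $\frac{1}{r_2}$ to $\frac{1}{\lambda r_1}$). In short: same strategy, but your write-up is the rigorous one, and the paper's shortcut is not merely terse but incorrect as stated.
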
 
\begin{proof} 
By the definition of $r_2$ we have $P_{r_2}- \lambda r_2\leq P_{r_1}- \lambda {r_1}$ so 
\begin{align*}
& \frac{P_{r_2}}{r_2}\leq \lambda+\frac{P_{r_2}-\lambda r_2}{r_2}\leq \lambda+\frac{P_{r_1}-\lambda r_1}{r_2}\leq \lambda+\frac{r_1}{r_2}\frac{P_{r_1}-\lambda r_1}{r_1} \\ & \leq \lambda +\frac{r_1}{\lambda r_1} (\frac{P_{r_1}}{r_1}-\lambda)= \lambda +\frac{1}{\lambda}(\frac{P_{r_1}}{r_1}-\lambda)= \frac{1}{\lambda} \frac{P_{r_1}}{r_1}+\lambda - 1\leq  \frac{1}{\lambda} c_{OPT}(p)+\lambda - 1\\
\end{align*} 
We have used inequality $r_2\geq (1-\lambda)(r_0-1)+\lambda r_1\geq \lambda r_1$. 

We have also made the assumption that $\frac{P_{r_1}}{r_1}\leq c_{OPT}(p)$. In fact $c_{OPT}(p)=\frac{P_{r_1}}{OPT_{r_1}}$, and $OPT_{r_1}=min(r_1,M_{*}(p))\leq r_1$, 
so this is correct. 
\end{proof} 

We now return to the proof of the main theorem. 

Each upper bound for consistency/robustness has two terms. The ones containing $r_2$ cannot be improved (at least without changing the definition of $r_2$ in Line 6 of Algorithm 2). But we can choose $r_{3}$ to optimize consistency, while still guaranteeing a $1+\frac{1}{\lambda}(c_{OPT}(p)-1)$ upper bound for robustness. Since $\frac{P_{r_2}}{r_2}\leq \lambda-1+\frac{1}{\lambda}c_{OPT}(p)$, this amounts to minimizing $\frac{P_{r_{3}}}{r_{3}}$ subject to $\frac{P_{r_{3}}}{OPT_{r_{3}}(p)}\leq \lambda-1+\frac{1}{\lambda}c_{OPT}(p)$.

The robustness upper bound is thus at most $\lambda-1+\frac{1}{\lambda}c_{OPT}(p)$. The consistency bound is $max(\frac{P_{r_2}}{M_{*}(P)}, \frac{P_{r_{3}}}{r_{3}})$ (or only the first term, whichever is the case). 

\subsection{Why we guarantee robustness instead of strong robustness, as in \cite{purohit2018improving}}
\label{def:expl}


The very careful reader has undoubtedly noted two subtle technical points concerning the definitions/results in the main paper: 
\begin{itemize} 
\item[-] Our definition of robustness is different/weaker than the one employed in \cite{purohit2018improving}. 
\item[-] For fixed-price ski-rental, our Algorithm~\ref{zeroprime-alg} is reminiscent of the algorithm with predictions in \cite{purohit2018improving} and has identical consistency/robustness guarantees, but does \textbf{not} coincide with this algorithm. 
\end{itemize} 

In this section we discuss/explain why these two phenomena take place.

The algorithm from \cite{purohit2018improving} interpolates between the optimally competitive algorithm without predictions for ski rental (that always buys on day $B$), for $\lambda = 1$, and the algorithm for optimal predictions that either buys on day $1$, if $\widehat{T}\geq B$, or rents (i.e. buys on day $\infty$, otherwise). This is simulated, for $\lambda \approx 0$, by buying on a day later than $B$ in the case $\widehat{T}<B$. 

Our algorithm deals with a more complicated problem, that of ski-rental with known varying prices. The optimal day for buying with no (or pessimal predictions) is, as shown by Theorem~\ref{thm-full-free}(b), assuming case (a) of the theorem does \textbf{not} apply, day $r_1(p)$. The smallest possible buying price is obtained, on the other hand, on day $r_0(p)$. Our algorithm tries to interpolate, for $\lambda\in (0,1)$, between buying on days $r_0(p)$, for $\lambda\approx 0$ and $r_1(p)$, for $\lambda =1$.\footnote{The reason one has $r_0(p)-1$ is to allow the buying time to be $r_0(p)$ for $\lambda\approx 0$; this happens because of the ceiling function.} In that it parallels the philosophy of the algorithm from \cite{purohit2018improving}, that buys on day 1 (if $\widehat{T}>B$ and $\lambda\approx 0$) or on day $B$ (if $\lambda=1$). 

In the case of fixed-cost ski-rental the proposed algorithm is simple (according to the terminology of Definition~\ref{def-simple-algs}). The exact functions $f_1(p)$, $f_{2}(p)$ (or, rather, the buying days $f_1,f_2$ for the two cases of the algorithm, since in this scenario there is a single input $p$) are chosen carefully so that the resulting algorithm is $(1+\lambda)$-consistent. They are $f_1=\lceil \lambda B\rceil$, $f_2=\lceil \frac{B}{\lambda}\rceil$. Buying on day $f_1$ when $T=\widehat{T}\geq B$ (which is equal to $M_{*}(p)$, in our notation) ensures that, since the ratio $\frac{f_1-1+B}{B}$ is at most $\frac{B+B\lambda}{B}=1+\lambda$. For $T=\widehat{T}<B$ there are two cases: 
\begin{itemize} 
\item[-] $T< f_2$. In this case both the algorithm and OPT rent, and the competitive ratio is equal to $1$. 
\item[-] $B>T\geq f_2$. In this case the competitive ratio is at most $\frac{B+f_2-1}{T}\leq \frac{B+f_2-1}{f_2}\leq \frac{B+\frac{B}{\lambda}}{\lambda B}=1+\lambda$. 
\end{itemize} 

There are several other requirements we want satisfied, that are easily satisfied for fixed-cost ski rental, but not necessarily so for variable cost ski-rental. 

One such requirement is Lemma~\ref{lemma-opt2}: because of this result the performance guarantee of the algorithm varies predictably with $\lambda$, truly interpolating between the two limit cases. Another requirement is that when $\lambda=1$, $r_{2}(p,1)=r_{3}(p,1)=r_{1}(p)$. That is, when $\lambda = 1$ the algorithm reduces to the one without predictions. 

However, there is a question how to define $r_{2}(\lambda)$, while satisfying these two requirements: in the fixed-case we could simply take it to be $\lceil (1-\lambda) (r_0-1)+ \lambda r_1\rceil$, as the sequence of prices $P_i=i-1+B$ is increasing with $i$. In the variable price case this is, obviously, no longer true. 
\begin{itemize} 
\item[-] By the previous discussion, choosing $r_{2}(p,\lambda)$ to be $\lceil (1-\lambda) (r_0-1)+ \lambda r_1\rceil$ would potentially violate (the first statement of)  Lemma~\ref{lemma-opt2}. 
\item[-] Another possibility (that we have actually tried, in a preliminary version of this paper) was to take 
$r_{2}(p,\lambda)$ to be the  first day $\geq \lceil (1-\lambda) (r_0-1)+ \lambda r_1\rceil$ such that $P_{r_2}=min(P_{t}:t\geq \lceil (1-\lambda) (r_0-1)+ \lambda r_1\rceil ) $. But this definition does not satisfy the second requirement !
\end{itemize} 
It seems to us that \textbf{it is impossible to simultaneously satisfy all these requirements, while extending the algorithm from~\cite{purohit2018improving}, and its analysis, from ski-rental with fixed prices to ski-rental with varying prices.}\footnote{or, at least, we saw no obvious way to do it.} 

The solution we have found is, we believe, a reasonable compromise: 
\begin{itemize} 
\item[-] first, as we prove, it satisfies the two desired requirements.
\item[-] second, while it slightly differs from the algorithm from \cite{purohit2018improving} in case of ski-rental with fixed prices, it leads (as we show in Example~\ref{fixed-price}) to the same guarantees as those from \cite{purohit2018improving}. So the change of algorithms is inconsequential in the fixed-price case. 
\item[-] third, as we show experimentally in Section~\ref{sec:experiments}, the new algorithm behaves similarly to the one in \cite{purohit2018improving}. This is precisely why our first set of experiments attempted to replicate the settings of \cite{purohit2018improving} as close as possible. 
\item[-] Fourth, since $\gamma\geq 1$, our requirement for robustness in Definition~\ref{def:one} is stronger than the one from \cite{purohit2018improving}, implying $\gamma$-robustness in the sense of their definition. The converse is not necessarily true, that's why we say in footnote 6 that our definition offers slightly weaker robustness guarantees: an algorithm that is $\gamma$-robust according to \cite{purohit2018improving} may only be $\gamma^{\prime}$-robust (according to Definition~\ref{def:one}) for some $\gamma^{\prime}>\gamma$. In any case, the change in definition is not an issue for our (analysis of) Algorithm~\ref{zeroprime-alg}.  
\end{itemize} 

\subsection{Proof of Lemma~\ref{lemma-opt2} (a).} 

\begin{proof} 

\textbf{First inequality.} 

First note that 
$M_{*}(p)$ does not depend on $\lambda$, but $P_{r_2}$ may do so. 
For $\lambda_1<\lambda_2$ we have $\lceil (1-\lambda_1) (r_{0}-1)+ \lambda_1 r_1\rceil \leq \lceil (1-\lambda_2) (r_{0}-1)+ \lambda_2 r_1\rceil$, so, by the definition of $r_2$ we have that $r_2(\lambda_2)$ is a candidate in the definition of $r_2(\lambda_1)$, so 
\begin{equation} P_{r_2(\lambda_1)}-\lambda_1OPT_{r_{2}(\lambda_1)}\leq P_{r_2(\lambda_2)}-\lambda_1 OPT_{r_{2}(\lambda_2)}
\label{f1}
\end{equation}
\begin{itemize}[leftmargin=*] 
\item If $OPT_{r_2(\lambda)}=r_{2}(\lambda)$ for $\lambda\in \{\lambda_1,\lambda_2\}$, this translates to 
\begin{equation} 
P_{r_2(\lambda_1)}-\lambda_1 r_2(\lambda_1)\leq P_{r_2(\lambda_2)}-\lambda_1 r_2(\lambda_2). 
\end{equation}
If we could prove that $r_2(\lambda_1)\leq r_2(\lambda_2)$ then  it would follow that $P_{r_2(\lambda_1)}\leq P_{r_2(\lambda_2)}$. 
Assume, therefore, that 
\begin{equation} r_2(\lambda_1)>r_2(\lambda_2).
\end{equation}  
Because of this relation and the definition of function $r_2$ we have 
\begin{equation} 
P_{r_2(\lambda_2)}-\lambda_2 r_{2}(\lambda_2)\leq P_{r_2(\lambda_1)}-\lambda_2 r_{2}(\lambda_1). 
\label{hyp1}
\end{equation} 
But then 
\begin{align*}
& P_{r_2(\lambda_1)}-\lambda_2 r_2(\lambda_1)= 
P_{r_2(\lambda_1)}-\lambda_1 r_2(\lambda_1)+(\lambda_1-\lambda_2)r_{2}(\lambda_1)\stackrel{(10)}{\leq} P_{r_2(\lambda_2)}-\lambda_1 r_2(\lambda_2)+\\ & +(\lambda_1-\lambda_2)r_{2}(\lambda_1)\leq
 P_{r_2(\lambda_2)}-\lambda_2 r_2(\lambda_2)+ (\lambda_2-\lambda_1)r_2(\lambda_2)+
(\lambda_1-\lambda_2) r_{2}(\lambda_1)= \\ &
P_{r_2(\lambda_2)}-\lambda_2 r_2(\lambda_2)+(\lambda_2-\lambda_1)(r_{2}(\lambda_2)-r_{2}(\lambda_1))\stackrel{(11)}{<}   P_{r_2(\lambda_2)}-\lambda_2 r_2(\lambda_2)
\end{align*} 
contradicting~(\ref{hyp1}). So $r_2(\lambda_1)\leq r_2(\lambda_2)$, and the inequality $P_{r_2(\lambda_1)}\leq P_{r_2(\lambda_2)}$ follows. 

\item If on the other hand $OPT_{r_2(\lambda_1)}=OPT_{r_2(\lambda_2)}=M_{*}(p)$ we get again $P_{r_2(\lambda_1)}\leq P_{r_2(\lambda_2)}$ by~(\ref{f1}). 

\item The remaining case is when one of $OPT_{r_2(\lambda_1)}, OPT_{r_2(\lambda_2)}$ is $M_{*}(p)$, the other is not. 

Again the case $OPT_{r_2(\lambda_1)}=r_{2}(\lambda_1)\leq M_{*}(p)$, $OPT_{r_2(\lambda_2)}= M_{*}(p)$ yields directly, via~(\ref{f1}), the conclusion  $P_{r_2(\lambda_1)}\leq P_{r_2(\lambda_2)}$. So assume that $OPT_{r_2(\lambda_1)}= M_{*}(p)$ and 
$OPT_{r_2(\lambda_2)}=r_2(\lambda_2)\leq M_{*}(p)$. So $r_{2}(\lambda_2)\leq M_{*}(p)\leq r_{2}(\lambda_1)$. 

Because of this relation and the definition of function $r_2$ we have 
\begin{equation} 
P_{r_2(\lambda_2)}-\lambda_2 OPT_{r_{2}(\lambda_2)}\leq P_{r_2(\lambda_1)}-\lambda_2 OPT_{r_{2}(\lambda_1)}. 
\label{hyp2}
\end{equation} 
that is 
\begin{equation} 
P_{r_2(\lambda_2)}-\lambda_2 r_{2}(\lambda_2)\leq P_{r_2(\lambda_1)}-\lambda_2 M_{*}(p). 
\label{hyp3}
\end{equation}

But then 
\begin{align*}
& P_{r_2(\lambda_1)}-\lambda_2 OPT_{r_2(\lambda_1)}= 
P_{r_2(\lambda_1)}-\lambda_1 M_{*}(p)+(\lambda_1-\lambda_2)M_{*}(p)\stackrel{(10)}{\leq} P_{r_2(\lambda_2)}-\lambda_1 r_2(\lambda_2)+ \\ 
& (\lambda_1-\lambda_2)M_{*}(p)\leq  P_{r_2(\lambda_2)}-\lambda_2 r_2(\lambda_2)+ (\lambda_2-\lambda_1)r_2(\lambda_2)+
(\lambda_1-\lambda_2) M_{*}(p)=
P_{r_2(\lambda_2)}- \\ 
& \lambda_2 r_2(\lambda_2)+(\lambda_2-\lambda_1)(r_{2}(\lambda_2)-M_{*}(p))\stackrel{(11)}{<}   P_{r_2(\lambda_2)}-\lambda_2 r_2(\lambda_2)
\end{align*} 
contradicting~(\ref{hyp3}). So the proof of the first inequality is complete. 

\end{itemize} 
\textbf{Second inequality.} 

As for the second inequality, since $\lambda_1<\lambda_2\leq 1$ and $c_{OPT}(p)\geq 1$, we have
\begin{equation} 
\lambda_2-\lambda_1\leq \frac{(\lambda_2-\lambda_1)}{\lambda_1 \lambda_2}c_{OPT}(p)
\end{equation}
But this is equivalent to $\lambda_2+\frac{1}{\lambda_2}c_{OPT}(p) \leq \lambda_1+\frac{1}{\lambda_1}c_{OPT}(p)$, so 
\begin{equation} 
\lambda_2-1+\frac{1}{\lambda_2}c_{OPT}(p) \leq \lambda_1-1+\frac{1}{\lambda_1}c_{OPT}(p)
\end{equation} 
This shows that 
$r_3(\lambda_{2})$ is among the candidates for $r_{3}(\lambda_1)$. But this means that $\frac{P_{r_3(\lambda_1)}}{r_{3}(\lambda_1)}\leq  \frac{P_{r_3(\lambda_2)}}{r_{3}(\lambda_2)}$. 

\end{proof} 

\subsection{Proof of Lemma~\ref{lemma-opt2} (b).} 

\begin{proof} 
Choose $\lambda$ small enough such that $r_{0} \geq  (1-\lambda) (r_{0}-1)+ \lambda r_1 $, that is $\lambda(r_1-r_{0}+1)\leq 1$. Then $\lceil (1-\lambda) (r_{0}-1)+ \lambda r_1\rceil = r_{0}$. 

To show that for small values of $\lambda$ we have $r_2=r_{0}$ it is enough to show that for such values of $\lambda$ we have $P_{r_{0}}-\lambda OPT_{r_{0}}\leq P_{r}-\lambda OPT_r$ for all $r\geq r_{0}$. We proved above that $r_0\leq M_{*}(p)$ so, by Lemma~\ref{lemma-opt}, this is equivalent to $P_{r_{0}}-\lambda r_{0}\leq P_{r}-\lambda OPT_r$ for all $r\geq r_{0}$
This amounts to $\lambda(OPT_r-r_{0})\leq P_{r}-P_{r_{0}}$. Since $OPT_r=min(r, M_{*}(p))\geq r_0$, the term on the left-hand side is non-negative. If it is zero then the inequality is true for all values of $\lambda$, since $P_r\geq P_{r_0}=M_{*}(p)$. If, on the other hand $r_0<OPT_r$ then to make the inequality true, given that $OPT_r\leq M_{*}(p)$,  is enough that $\lambda<\frac{1}{M_{*}(p)-r_{0}}$. We have used here the fact that for $r>r_{0}$ we have $P_{r}-P_{r_{0}}\geq 1$. This is because $r_0$ is the \textbf{largest} value for which $P_{r_{0}}=M_{*}(p)$. Hence for $r>r_{0}$ we have $P_{r}\geq M_{*}(p)+1=P_{r_0}+1$. 

Now choosing $\lambda<\frac{1}{M_{*}(p)-r_{0}+1}$ is enough to satisfy both conditions.  
\end{proof}

\end{document}